\renewcommand\footnotetextcopyrightpermission[1]{} 
\pgfplotsset{compat=newest} 
\pgfplotsset{plot coordinates/math parser=false} 
\newtheorem*{remark-non}{Remark}
\newenvironment{hproof}{%
  \proof}{\endproof}
\providecommand{\customgenericname}{}
\newcommand{\newcustomtheorem}[2]{%
  \newenvironment{#1}[1]
  {%
   \renewcommand\customgenericname{#2}%
   \renewcommand\theinnercustomgeneric{##1}%
   \innercustomgeneric
  }
  {\endinnercustomgeneric}
}
\newcommand{\boldpi}{\boldsymbol{\pi}}
\newcommand{\mc}{\mathcal}
\newcommand{\N}{\mc{N}}
\newcommand{\T}{\top}
\begin{document}
%
\title{Coordinating the Crowd: Inducing Desirable Equilibria in Non-Cooperative Systems}
\author{David Mguni}
\affiliation{%
 \institution{PROWLER.io}
 \city{Cambridge} 
 \state{UK} 
}
\email{davidmg@prowler.io}

\author{Joel Jennings}
\affiliation{%
 \institution{PROWLER.io}
 \city{Cambridge} 
 \state{UK} 
}
\email{joel@prowler.io}

\author{Sergio  Valcarcel Macua}
\affiliation{%
 \institution{PROWLER.io}
 \city{Cambridge} 
 \state{UK} 
}
\email{sergio@prowler.io}

\author{Emilio Sison}
\affiliation{%
 \institution{Department of Mechanical Engineering, MIT} 
 \city{Cambridge, MA}
 \state{USA}}
 \email{esison@mit.edu} 
\author{Sofia Ceppi}
\affiliation{%
 \institution{PROWLER.io}
 \city{Cambridge} 
 \state{UK} 
}
\email{sofia@prowler.io}
\author{Enrique Munoz de Cote} 
\affiliation{%
 \institution{PROWLER.io}
 \city{Cambridge} 
 \state{UK} 
}
\email{enrique@prowler.io}
\renewcommand{\shortauthors}{D. Mguni et al.}
\begin{abstract}
Many real-world systems such as taxi systems, traffic networks and smart grids involve self-interested actors that perform individual tasks in a shared environment. However, in such systems, the self-interested behaviour of agents produces welfare inefficient and globally suboptimal outcomes that are detrimental to all --- some common examples are congestion in traffic networks, demand spikes for resources in electricity grids and over-extraction of environmental resources such as fisheries. We propose an \textit{incentive-design} method which modifies agents' rewards in non-cooperative multi-agent systems that results in independent, self-interested agents choosing actions that produce optimal system outcomes in  strategic settings. Our framework combines multi-agent reinforcement learning to simulate (real-world) agent behaviour and black-box optimisation to determine the optimal modifications to the agents' rewards or \textit{incentives} given some fixed budget that results in optimal system performance. By modifying the reward functions and generating agents' equilibrium responses within a sequence of offline Markov games, our method enables optimal incentive structures to be determined offline through iterative updates of the reward functions of a simulated game. Our theoretical results show that our method converges to reward modifications that induce system optimality. We demonstrate the applications of our framework by tackling a challenging problem within economics that involves thousands of selfish agents and tackle a traffic congestion problem.
\end{abstract}

\maketitle
\section{Introduction}
  Complex systems such as traffic networks, smart grids and fleet networks involve autonomous agents that each seek to perform individual tasks. One such example is a ride-sharing network such as an Uber fleet which involves many self-interested (freelance) drivers that each use the same road network and have access to a common supply of customers. Other examples are road traffic networks used by commuters, electricity grids with households drawing from the network and smart grids. In each of these settings, agents utilise a shared resource to maximise their individual objectives. 
  
  Multi-agent systems (MASs) in which agents act non-cooperatively to maximise their own interests are modelled by Markov game (MGs). In MGs, although each agent acts rationally, that is, to maximise its own interests, the lack of coordination produces stable outcomes or \textit{Nash equilibria} (NE) that are vastly suboptimal from a system perspective and undermine firm efficiency \cite{dubey1986inefficiency}. 
  
  In the case of ride-sharing networks, drivers' self-interested behaviour and their preference to locate at certain regions results in inefficient clustering that produces a distribution of taxis that does not match customer locations \cite{miao2016taxi}. This results in a market inefficiency and prevents firms from maximising output. In electricity networks,  excessive demand at specific periods leads to demand spikes that overwhelm electrical supply; in traffic networks the actions of self-interested commuters leads to heavy congestion and traffic delays resulting in poor network outcomes.  

To alleviate these problems, network designers can employ incentives to modify the strategic behaviour of the self-interested agents. However, in an MAS, these incentives must be carefully calibrated to induce desirable outcomes from the \textit{joint behaviour} of selfish actors in  dynamic environments and often, with (budgetary) constraints on the size of incentives or penalties. Additionally, in settings such as smart grids and traffic networks, the design of incentives must also account for adjustments in the system state such as changes in customer demand for taxis; consequently, designing incentives is a formidable challenge \cite{roughgarden2005selfish}.

Although in many MAS, the agents' reward functions are known (e.g. minimising commute time, firm profit maximisation) or a sufficiently accurate proxy can be constructed from data, designing incentives remains a challenge. This is due to the fact that changes to the agents' \textit{joint } behaviour (and the resulting system outcomes) after modifications to their rewards is generally difficult to predict. 

In general, it is known that in many real-world MASs, human strategic interaction approximates Nash equilibrium strategies. Multi-agent reinforcement learning (MARL) is a powerful tool that enables computerised agents to \textit{learn} strategic behaviour after repeated interactions in unknown systems - this enables MARL to serve as a useful tool to generate a proxy of outcomes in systems with human participants and simulate the behaviour of other computerised agents \cite{grau2018balancing}. 
As with algorithmic methods in game theory, MARL does not offer a method of promoting efficient outcomes that maximise social welfare (e.g. minimise travel time in traffic networks) or optimise external objectives (e.g. maximise taxi firm efficiency) and typically converge to poor system outcomes \cite{shoham2003multi}. 

We propose a new technique to tackle the issue of undesirable outcomes in MASs. In our framework, an incentive designer (ID) modifies agents' reward functions in such a way that ensures convergence to efficient outcomes. This modification, as shown in one of our experiments, can represent a toll charge on a traffic network that induces even traffic flow leading to reduced congestion. 

Using the known agents' intrinsic goals, our framework firstly uses MARL to learn the NE of simulated MAS and thus generate a proxy for real-world outcomes. This then allows us to model the induced changes in agents' behaviour given modifications to their rewards through incentives. The ID uses Bayesian optimisation in the \textit{simulated environment} to determine the optimal modifications to the agents' rewards to be implemented in the real-world settings.  The ID is not required to have a priori knowledge of the system performance metric but requires only the goal of the agents (e.g. arriving at work in the quickest time possible). 

We concern ourselves with Markov potential games (MPGs) --- a class of MGs that model settings in which agents compete for a common resource such as selfish routing games (transportation networks) \cite{roughgarden2005selfish}, spectrum sharing (wireless communications) \cite{zazo2015dynamic}, oligopoly \cite{slade1994does}, electric power grids \cite{ibars2010distributed} and cloud computing \cite{chen2016efficient}.

We prove theoretical results that demonstrate that within MPGs, the ID's modifications to the game produces a continuous family of NE. Crucially, this allows the ID to use \textit{black-box optimisation} techniques to find the reward modifications that induce desirable behaviour in the agents. Since the reward modifier influences the potential function - a function that is maximised by all agents' NE strategies, the method can be used to induce the desired behaviour in any number of agents. This is exemplified in one of our experiments in which we successfully modify the rewards of 2,000 agents. 

\textbf{Contributions.}  \textbf{i)} We propose an algorithmic framework that determines how to modify the rewards (i.e. find incentives) in an MPG environment that lead to optimal system performance. 
\textbf{ii)} We show that the set of MGs with modified rewards are MPGs, and that the equilibrium set is continuous on the reward modifications. As we show, this allows us to prove existence of an optimal reward modifier. We prove convergence to the reward modifier that induces efficient NE and provide an approximation bound when the optimal reward modifier is estimated with a method that has low computational complexity. \textbf{iii)} We illustrate the framework in a set of experiments that tackle a logistic problem involving a system with 2,000 agents and a traffic network problem within a subsection of London.

\textbf{Related Work.} Our work relates to mechanism design (MD) \cite{nisan2001algorithmic} and its dynamic and learning variants \cite{tang2017reinforcement}. These incomplete information models analyse the problem of constructing a \emph{mechanism} - a system of rewards and transfers, among self-interested strategic agents that have private information about their reward functions. The problem is to incentivise truth-revealing announcements from the agents. A well-known result in MD rules out (strategy-proof) mechanisms that induce the desired agent behaviour for general agent reward functions~\cite{satterthwaite}. Therefore, in MD, agents' reward functions are (typically) limited to quasi-linear functions that are known up front  \cite{nisan2001algorithmic}. Our framework permits a general rewards beyond quasi-linear functions.
 
 This work relates to leader-follower games - sequential games in which a leader moves in advance of other agent(s) or \emph{follower(s)}, who each select a best response strategy~\cite{tharakunnel2007leader}. However, in leader-follower games, the leader cannot induce efficient outcomes i.e. maximise its own objective (e.g. ex. 98.1 in \cite{osborne1994course}) since the leader's reward is a function over a fixed joint action set. 

 Our work also relates to reward shaping through which a reward is added with the aim of inducing convergence to a more desirable equilibrium \cite{babes2008social}. The majority of the reward shaping literature is concerned with \emph{potential based} reward shaping. Potential based reward shaping  leaves the NE set unaltered and does not guarantee convergence to more efficient equilibria \cite{devlin2011}. A number of papers handle non-potential based rewards shaping e.g. \cite{peysakhovich}, however, such papers are limited in scope since they consider only specific normal form games settings e.g. the stag hunt game\footnote{In \cite{peysakhovich} some experiments on repeated games are performed but no theoretical analysis is provided.}. We tackle the MG case which adds considerable complexity since it requires a method of incentivising \emph{sequences} of state-action pairs (trajectories) in a stochastic environment. 

\section{Preliminaries}
\label{sec:MG}
Let $\mathcal{N}\triangleq\{1,\ldots,N\}$ denote the  (possibly infinite) set of agents where $N\in\mathbb{N}\times\{\infty\}$. An MG is a tuple:\\ $\mathcal{G}=\langle\mathcal{N},\left( \gamma_i \right)_{i\in\N}, \mathcal{S}, (\mathcal{U^i})_{i\in\mathcal{N}}, P, (R_i)_{i\in\mathcal{N}}\rangle$ which can be described as follows: at each time step $t= 1,2,\ldots T\in\mathbb{N}\times\{\infty\}$, 
the state of the system is given by $s\in\mathcal{S}\subseteq \mathbb{R}^p$ for some $p\in\mathbb{N}$. The game is equipped with an action set $\boldsymbol{\mathcal{U}} =\times_{i\in\mathcal{N}} \mathcal{U}^{i}$ -- a Cartesian product of each agent's action set $\mathcal{U}^{i}$. Each set $\mathcal{U}^{i}$ is a compact, non-empty action set for each agent $i\in\mathcal{N}$. 
 We define by $\mathcal{U}^{-i} =\times_{j\in\mathcal{N}\backslash\{i\}} \mathcal{U}^{j}$ - the Cartesian product of all agents' action sets except agent $i$. At each time step, the next state of the game is determined by a probability distribution $P:\mathcal{S}\times  \boldsymbol{\mathcal{U}} \times \mathcal{S}$ so that $P(\cdot|s,\boldsymbol{u})$ gives the probability distribution over next states given a current state $s$ when the agents take a joint action $\boldsymbol{u}\in \boldsymbol{\mathcal{U}}$. When the environment is at state $s$ and the agents take action $\boldsymbol{u}$, each agent $i$ receives a reward computed by a Lipschitz function $R_{i}:\mathcal{S}\times\mathcal{U}^i\times\mathcal{U}^{-i}\to\mathbb{R}$. The term $\gamma_{i}\in [0,1[$ is each agent $i$'s discount factor. Each agent has a stochastic policy
 $\pi^{i}: \mathcal{S}\times  \mathcal{U}^i \to \mathbb{R}^+$ - a conditional distribution over the action set given the current state. 
 Let $\Pi^i$ be a non-empty set of stochastic policies over $\mathcal{S}\times  \mathcal{U}^i$ such that $\pi^i \in \Pi^i$.
 We denote by $\boldsymbol{\Pi}$ the set of policies for all agents i.e. $\boldsymbol{\Pi}\triangleq\times_{i\in \mathcal{N}}\Pi^i$, where each  $\pi^{i}$, and by  $\Pi^{-i}\triangleq\times_{j\in \mathcal{N}\backslash\{i\}}\Pi^j$. For simplicity, we assume $\Pi^j=\Pi^i, \forall i\neq j$. 
 The joint policy of all agents is denoted by
 $
    \boldsymbol{\pi} 
=
    \left( \pi^i \right)_{i\in\N}\in\boldsymbol{\Pi}
$,
while the joint policy of all but the $i$-th agent is denoted 
$
    \pi^{-i}
=
    \left( \pi^j \right)_{j\in\N\backslash\{i\}}
$.
We will sometimes write
$
    \boldsymbol{\pi} 
=
    \left( \pi^i, \pi^{-i} \right)
$ for any $i \in \N$.
 
Each agent $i\in\mathcal{N}$ uses a \textit{value function}, $v_{i}^{\boldpi}:\mathcal{S}\times \boldsymbol{\Pi} \to\mathbb{R}$, as its objective function:
\begin{IEEEeqnarray}{rCl}
    v_{i}^{\boldpi}(s)
=
    \mathbb{E}\Big[\sum_{t=0}^T\gamma^t_{i}&R_{i}(s_t,u_{i,t},u_{-i,t})\nonumber
    \big|\boldsymbol{u}_t \sim\boldsymbol{\pi}(\cdot|s_t),\\&
    s_{t+1} \sim P(\cdot|s_t, \boldsymbol{u}_t), s_0= s\Big]
,
\label{eq:value function}
\end{IEEEeqnarray}
where $\boldsymbol{u}_t = (u_{i,t},u_{-i,t})$ is the joint action at time $t$.
We now give some essential definitions:
\begin{definition}
 The policy $\pi^{i}\in\Pi^i$ is a best-response policy against $\pi^{-i}\in\Pi^{-i}$ if: 
$
        \pi^i\in \underset{\widetilde{\pi}^i\in\Pi^i}{\arg\hspace{-0.3 mm}\max} 
            \;
            v_{i}^{ ( \widetilde{\pi}^i,\pi^{-i} )}
$.
\label{BR policy}
\end{definition}
A Markov-Nash equilibrium (M-NE) is the solution concept for MGs in which every agent plays a best-response against other agents. A M-NE is defined by the following: 
\begin{definition} 
A strategy 
$
    \boldsymbol{\pi} 
=
    \left( \pi^i \right)_{i\in\N}\in\boldsymbol{\Pi}
$
is an \textbf{M-NE} if  
\begin{align}v_{i}^{\left( \pi^i,\pi^{-i} \right)}(s)
\geq 
    v_{i}^{\left( \pi'^i,\pi^{-i} \right)}(s), \\\nonumber
\hfill\forall \pi'^i\in\Pi,
\forall \pi^{-i}\in\Pi^{-i},
\forall s \in \mathcal{S},
\forall i \in \mathcal{N}.
\end{align}

\end{definition} 
The M-NE condition ensures no agent can improve their rewards by deviating unilaterally from their current strategy. We define  $NE\{\mathcal{G}\}$ as the set of M-NE for the game $\mathcal{G}$. 
  
\begin{definition}\label{potentialdef}
An MG is called an \textit{exact MPG} or an \textbf{MPG} for short, if 
there exists a function $\Phi:\mathcal{S}\times\bold{\Pi}\to\mathbb{R}$ such that:
\begin{IEEEeqnarray}{C}
    v_{i}^{\left( \pi^i,\pi^{-i} \right)}(s)
    -
    v_{i}^{\left( \pi'^i,\pi^{-i} \right)}(s)
=
    \Phi^{\left( \pi^i,\pi^{-i} \right)}(s)
    -
    \Phi^{\left( \pi'^i,\pi^{-i} \right)}(s)
\notag\\\qquad\qquad
\forall \pi'^i\in\Pi^i
,\:
\forall \pi^{-i}\in\Pi^{-i}
,
\forall s \in \mathcal{S}
,\: 
\forall i\in\N
\label{potentialcondition}
\end{IEEEeqnarray} 
\end{definition}
Note that $\Phi^{\boldpi}(s)$ gives the same value for all agents.
We use $\mathcal{G}(\boldsymbol{w})$ to denote an MPG. In this paper, we focus exclusively on MPGs.
 
\section{The Framework}
\label{sec:incentive designer-framework}
We now describe how the ID modifies the MG played by the agents. The problem is arranged into a hierarchy in which the ID chooses the reward function of the game and a \textit{simulated} subgame which models the joint behaviour of the agents.  The goal of the ID is to modify the set of agent reward functions for the subgame that induces behaviour that maximises the ID's payoff. Crucially, in the MAS model, the agents are required to behave rationally and hence produce the responses of self-interested agents in an environment with the given reward functions. Using feedback from the simulated subgame in response to changes to the agents' reward functions, the ID can compute precisely the modifications to the agents' rewards that produce desirable equilibria among self-interested agents. The simulated environment avoids the need for costly acquisition of feedback data from real-world environments whilst ensuring the generated agent behaviour is consistent with real-world outcomes.

\textbf{The MAS model} consists of solving the Markov game $\mathcal{G}(\boldsymbol{w})=\langle\mathcal{N},\left( \gamma_i \right)_{i\in\N}, \mathcal{S}, (\mathcal{U}^i)_{i\in\N}, P, (R_{i,\boldsymbol{w}})_{i\in\N}\rangle$ i.e. finding $\boldsymbol{\pi} \in NE \{\ \mathcal{G}(\boldsymbol{w}) \}$ where the parameter $\boldsymbol{w}$ is chosen by the ID. Now each agent $i\in \mathcal{N}$ has a value function  $v_{i}^{\boldpi,\boldsymbol{w}}:\mathcal{S}\times \boldsymbol{\Pi}\times\boldsymbol{W}\to\mathbb{R}$ given by:
\begin{align*}
    v_{i}^{\boldpi,\boldsymbol{w}}(s)
=
    \mathbb{E}
    \Big[
        \sum_{t=0}^T
            \gamma^t_{i}
            &R_{i,\boldsymbol{w}}(s_t,u_{i,t},u_{-i,t})
        \big|\\&
        \boldsymbol{u}_t \sim\boldsymbol{\pi}(\cdot|s_t),
    s_{t+1} \sim P(\cdot|s_t, \boldsymbol{u}_t), s_0= s
    \Big]
\end{align*}
%
The most natural alteration to an agent's reward function is for it to be modified additively by a \textbf{modifier function} $\Theta:\mathcal{S}\times \mathcal{U}^i \times \mathcal{U}^{-i}  \times\boldsymbol{W}\to\mathbb{R}$ such that the agents' modified reward function becomes:
\begin{equation*}
    R_{i,\boldsymbol{w}}(s_t,u_{i,t},u_{-i,t})\triangleq R_{i}(s_t,u_{i,t},u_{-i,t})+ \Theta(s_t,u_{i,t},u_{-i,t},\boldsymbol{w})
\end{equation*}
where 
$R_{i}:\mathcal{S}\times \mathcal{U}^i\times \mathcal{U}^{-i}\to\mathbb{R}$
 is an \textbf{\lq intrinsic reward\rq} $ $ that cannot be modified by the ID. This function describes the agents' goals e.g. minimising travel time in their commute. We assume a sufficiently good proxy is available or the function is known to the ID. The function $\Theta$ is a modification to each agent's reward function and represents an incentive - for example, it may represent a toll charge in a traffic network or a surcharge in a smart grid which depends on factors such as time of day and the predicted available supply. 
 
Note that the modifier function includes cases for which $\Theta(\cdot,u_{-i,t},)=\Theta(\cdot,u'_{-i,t}),\; \forall u_{-i,t}\neq u'_{-i,t}\in\mathcal{U}^{-i}$ in which case the modifier function adds rewards that do not depend on actions other than those taken by agent $i$.  

We denote the \textbf{cumulative sum of incentives} by $
    \Psi(\boldsymbol{w}, \boldsymbol{\pi}):=\sum_{i\in\mathcal{N}}\sum_{t=0}^T\Theta(s_t,u_{i,t},u_{-i,t}, \boldsymbol{w})
$

\noindent\textbf{The incentive designer's problem} consists of a tuple $P_{\rm ID}\triangleq\langle\boldsymbol{w}, R_{\rm ID}\rangle$ where $\boldsymbol{w}\in\boldsymbol{W}\subset\mathbb{R}^l$ ($l\in\mathbb{N}$) is a set of vector of real-valued parameters over a space of parametric uniformly continuous functions and $R_{\rm ID}$ is the reward function for the ID. The ID's problem is to find $\Theta$ (i.e. the vector of parameters $\boldsymbol{w})$ that maximises the following:
\begin{equation}
    J (\boldsymbol{w},\boldsymbol{\pi}) 
:= 
    \mathbb{E}\big[R_{\rm ID}(\boldsymbol{w}, \boldsymbol{\pi})-\lambda\Psi(\boldsymbol{w}, \boldsymbol{\pi})\big],\;\lambda\in\mathbb{R}
\end{equation}
whilst satisfying the M-NE condition  which ensures that the agents play best-response policies. Thus the ID's problem is:
\begin{align}
&\underset{\boldsymbol{w} \in \boldsymbol{W}}  {\rm maximise}
\; J (\boldsymbol{w},\boldsymbol{\pi})\;
\nonumber
    {\rm s.t.}
\;
    v_{i}^{(\pi^{i},\pi^{-i}),\boldsymbol{w}}(s)
\geq 
    v_{i}^{(\pi'^i,\pi^{-i}),\boldsymbol{w}}(s),
  \\&  \forall i \in \mathcal{N}
    \:, 
    \forall \pi'_i \in \Pi^i
    \:,
    \forall \pi^{-i}\in\Pi^{-i}
    \:, 
    \forall s \in \mathcal{S}.\label{problemA}
\end{align}

\noindent where $J$ is a Lipschitz continuous function. The formulation describes numerous problems within economics and logistics 
including revenue management (e.g., ticket pricing), congestion management, and network design problems (e.g. tolling)
\cite{de2011traffic}.
The function $\Psi$ can be interpreted as a system of wealth transfers for example, in the case of freelance taxis, $\Psi$ represents rewards given to drivers for taking jobs at specific times and locations or surcharges to customers, and similarly for smart grid users at peak times. The following condition constrains the transfer of wealth to the set of agents:    
\begin{definition}
We say that the ID's choice of $\boldsymbol{w}\in \boldsymbol{W}$ is {\textbf{weakly budget balanced}}, if there is no net transfer of wealth from the ID to the agents:
$\Psi(\boldsymbol{w}, \boldsymbol{\pi})\leq 0.    \label{weak budget balance ineq}
$\end{definition} 

We consider two main types of reward function for the ID, depending on the ID's goal:

\noindent 1. \textbf{Trajectory targeted}: The ID's payoff 
is a function of the state trajectories produced by the agents' policies in the MG; i.e. is, 
$J (\boldsymbol{w},\boldsymbol{\pi}) \triangleq \mathbb{E}\big[ R_{\rm ID}(\boldsymbol{w},X^{\boldsymbol{\pi}},\zeta)\big]$, where $X^{\boldsymbol{\pi}}$ is Markov chain induced by the policy profile $\boldsymbol{\pi}\in\boldsymbol{\Pi}$ in $\mathcal{G}(\boldsymbol{w})$ and $\zeta$ is an i.i.d. random variable which captures outcome noise.
An example is taxi firm seeking to match the location of a set of freelance taxi drivers with (predicted) customer locations in some region. Here, the ID's objective could be given by a KL divergence between the distribution of taxis at every timestep, $D^a_t(\boldsymbol{w},\boldsymbol{\pi})$, and the target distribution of demand, $D^{\star}_t$:
$
    R_{\rm ID}^{\rm (tra)}
=
    \sum_{t=0}^T
        {\rm KL}(D^a_t(\boldsymbol{w},\boldsymbol{\pi}) \| D^{\star}_t)
$.
Other applications of trajectory targeted objectives are firms seeking to smoothen electricity consumption in smart grids through dynamic pricing \cite{de2011traffic} and modification of firm activity through taxation \cite{moledina2003dynamic}.

\noindent 2. \textbf{Welfare targeted}: The ID's payoff is a function of the agents' joint rewards, that is, 
$J (\boldsymbol{w},\boldsymbol{\pi}) \triangleq\mathbb{E}\big[R_{\rm ID}(\boldsymbol{w},h(v_u^{\boldsymbol{\pi},\boldsymbol{w}}),\zeta)\big]$, for some uniformly continuous function $h$ and $v_a^{\boldsymbol{\pi},\boldsymbol{w}}\triangleq\left( v_{i}^{\boldpi, \boldsymbol{w}} \right)_{i\in\mathcal{N}}$.
One example a traffic network manager that seeks to minimise travel time of all agents. In this cases, the ID is the sum of agents' negative costs (travel times)  i.e.:
$    R_{\rm ID}^{\rm (soc)}
=
    \sum_{i\in\mathcal{N}}v_{i}^{\boldpi,\boldsymbol{w}}
$,
 which results in the ID maximising social welfare. Similar examples are resource extraction and oligopoly intervention e.g. fishery problems using optimal taxation \cite{slade1994does} in which the ID seeks to maximise firm welfare whilst seeking to sustain a minimum amount of the resource and worst-case optimisation (maxmin) problems (i.e. by setting $h=-\boldsymbol{1}$).\newline
$\indent$ The ID problem \eqref{problemA} is a bilevel optimisation problem (mathematical program with equilibrium constraints). Such problems are generally highly non-convex with unconnected feasible regions. For this reason, the problem is generally highly intractable using analytic methods but for simple cases (e.g. linear rewards) \cite{colson2007overview}.\newline
$\indent$ In the next section, we overcome these issues by expressing the NE constraint in terms of the potential function, and show that MARL methods can be applied to compute the set of NE for the MAS model, so that we can ensure feasibility for the ID problem without requiring closed analytic solutions. Crucially, this, as we show, allows us to compute the agents equilibrium policies to an MG the reward function of which, is chosen by the ID. We prove continuity properties of the MPG with respect to the ID's changes to the reward function which allows the ID to produce an iterative sequence of reward functions. We then give a constructive formulation that allows to prove convergence to such an optimal solution for the ID. Finally, we provide an approximation bound when the optimal reward modifier is approximated with a truncated power series. We proceed to explain the details.

%
\section{Theoretical Analysis}
\label{sec:theoretical-analysis}
%
%
 We now show that $\mathcal{G}(\boldsymbol{w})$ is an MPG, which enables $NE\{\mathcal{G}(\boldsymbol{w})\}$ to be described in terms of local maxima of function (not fixed points). 
 
\begin{proposition}\label{potential function BR result}
There exists a function $ \Phi:\mathcal{S}\times\boldsymbol{\Pi}\to\mathbb{R}$ such that each agent's best-response strategy in $\mathcal{G}(\boldsymbol{w})$ maximises $\Phi$.
\end{proposition}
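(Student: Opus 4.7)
The plan is to construct $\Phi$ explicitly by piggy-backing on the potential of the intrinsic game and showing that the modifier $\Theta$ preserves the MPG structure. Because the paper restricts attention to MPGs, the intrinsic game (without the modifier) already admits a stage-level potential $\phi_0:\mathcal{S}\times\boldsymbol{\mathcal{U}}\to\mathbb{R}$ satisfying
\begin{equation*}
R_i(s,u_i,u_{-i}) - R_i(s,u'_i,u_{-i}) = \phi_0(s,u_i,u_{-i}) - \phi_0(s,u'_i,u_{-i})
\end{equation*}
for all $i\in\mathcal{N}$. Because $\Theta(\cdot,\boldsymbol{w})$ is added identically to every agent's reward, when agent $i$ unilaterally deviates from $u_i$ to $u'_i$ the change in its modified reward $R_{i,\boldsymbol{w}}$ equals the change in $\phi_0(s,\cdot,u_{-i}) + \Theta(s,\cdot,u_{-i},\boldsymbol{w})$. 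This is the key structural observation: symmetry of the modifier means any unilateral change in $u_i$ shifts all agents' rewards identically, so it preserves the potential relation at the stage level. Hence
\begin{equation*}
\phi_{\boldsymbol{w}}(s,\boldsymbol{u}) \;:=\; \phi_0(s,\boldsymbol{u}) + \Theta(s,\boldsymbol{u},\boldsymbol{w})
\end{equation*}
serves as a valid stage-level potential for $\mathcal{G}(\boldsymbol{w})$.

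Next, I would lift $\phi_{\boldsymbol{w}}$ to a value-level potential by setting
\begin{equation*}
\Phi^{\boldsymbol{\pi},\boldsymbol{w}}(s) := \mathbb{E}\Big[\textstyle\sum_{t=0}^{T}\gamma^t\,\phi_{\boldsymbol{w}}(s_t,\boldsymbol{u}_t)\;\big|\;\boldsymbol{u}_t\sim\boldsymbol{\pi}(\cdot|s_t),\,s_{t+1}\sim P(\cdot|s_t,\boldsymbol{u}_t),\,s_0=s\Big].
\end{equation*}
To verify the defining identity of Definition~\ref{potentialdef} for the modified game, namely
\begin{equation*}
v_i^{(\pi^i,\pi^{-i}),\boldsymbol{w}}(s) - v_i^{(\pi'^i,\pi^{-i}),\boldsymbol{w}}(s) = \Phi^{(\pi^i,\pi^{-i}),\boldsymbol{w}}(s) - \Phi^{(\pi'^i,\pi^{-i}),\boldsymbol{w}}(s),
\end{equation*}
I would use a Bellman-style decomposition: both sides obey the same one-step recursion in which the immediate term is precisely the stage-potential difference (which coincides with the reward difference by construction) while the continuation term is controlled by the same induced Markov chain $(\pi'^i,\pi^{-i})$ in both expressions. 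Induction on the horizon (or the standard telescoping argument for discounted cumulative rewards) then yields equality.

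The conclusion is then immediate: if $\pi^i$ is a best response to $\pi^{-i}$ in $\mathcal{G}(\boldsymbol{w})$, the left-hand side above is non-negative for every $\pi'^i\in\Pi^i$, and hence so is the right-hand side; that is, $\pi^i\in\arg\max_{\tilde\pi^i\in\Pi^i}\Phi^{(\tilde\pi^i,\pi^{-i}),\boldsymbol{w}}(s)$ for all $s$. The main obstacle is the telescoping step, because a unilateral deviation by agent $i$ changes the entire induced state–action distribution, not just a single stage; one must argue that the identity propagates through time under the common dynamics $P$. A secondary technical point is that the discount factors are indexed by agent in the paper's MG definition, so the argument implicitly uses a single $\gamma$ (as is standard in the MPG literature), or else $\Phi$ must be defined so that the deviating agent's $\gamma_i$ sits inside the expectation.
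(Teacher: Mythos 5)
There is a genuine gap in the central step of your construction: the lift from a stage-level potential to the value-level potential of Definition~\ref{potentialdef} does not follow from the stage-level exactness you assume. Write $R_{i,\boldsymbol{w}}(s,u_i,u_{-i}) = \phi_{\boldsymbol{w}}(s,u_i,u_{-i}) + D_i(s,u_{-i})$, where $D_i$ is the ``dummy'' residual that the stage-level potential property guarantees is independent of $u_i$. With your definition $\Phi^{\boldsymbol{\pi},\boldsymbol{w}}(s)=\mathbb{E}\big[\sum_{t=0}^{T}\gamma^t\phi_{\boldsymbol{w}}(s_t,\boldsymbol{u}_t)\mid s_0=s\big]$, the difference between the value-function gap and the potential gap under a unilateral deviation $\pi^i\to\pi'^i$ is exactly $\mathbb{E}_{(\pi^i,\pi^{-i})}\big[\sum_t\gamma^t D_i(s_t,u_{-i,t})\big]-\mathbb{E}_{(\pi'^i,\pi^{-i})}\big[\sum_t\gamma^t D_i(s_t,u_{-i,t})\big]$. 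This is not zero in general: although $D_i$ does not depend on $u_i$ pointwise, agent $i$'s policy changes the induced distribution of $(s_t,u_{-i,t})$ through $P$, so the residual's discounted value shifts with the deviation. Your proposed fix --- ``both sides obey the same one-step recursion ... induction on the horizon'' --- does not close this, because in the one-step decomposition of $v_i^{(\pi^i,\pi^{-i}),\boldsymbol{w}}-v_i^{(\pi'^i,\pi^{-i}),\boldsymbol{w}}$ the continuation terms sit under two different joint policies; making the argument go through requires extra structure (e.g.\ $D_i$ constant in the state, transitions unaffected by $u_i$, or the decoupling conditions used in the Markov-potential-game literature). You correctly flagged this as ``the main obstacle,'' but the proof as written asserts rather than resolves it, and in general the assertion is false.

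The paper avoids this entirely by working at the value level from the start: the unmodified game is assumed to be an MPG in the sense of Definition~\ref{potentialdef}, i.e.\ there already is a $\Phi^{\boldpi}(s)$ with $v_i^{(\pi^i,\pi^{-i})}-v_i^{(\pi'^i,\pi^{-i})}=\Phi^{(\pi^i,\pi^{-i})}-\Phi^{(\pi'^i,\pi^{-i})}$. Since $\Theta$ is added identically to every agent's reward, each modified value is $v_i^{\boldpi,\boldsymbol{w}}(s)=v_i^{\boldpi}(s)+\Lambda^{\boldpi,\boldsymbol{w}}(s)$ where $\Lambda^{\boldpi,\boldsymbol{w}}(s)$ is the expected discounted sum of $\Theta$ along the trajectory induced by $\boldpi$ --- a single common function of $(s,\boldpi)$, not of $i$ (given a common discount factor, the caveat you note). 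Hence $\Phi^{\boldpi,\boldsymbol{w}}:=\Phi^{\boldpi}+\Lambda^{\boldpi,\boldsymbol{w}}$ verifies the potential identity for $\mathcal{G}(\boldsymbol{w})$ immediately, with no telescoping and no stage-level potential needed, and the best-response conclusion follows as in your final paragraph. If you want to keep your stage-level starting point, you must either add the structural assumptions that make the residual terms policy-invariant or switch to the paper's value-level hypothesis.
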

 \noindent Prop. \ref{potential function BR result} reduces the problem of finding the M-NE for $\mathcal{G}(\boldsymbol{w})$ to a single optimal control problem as opposed to finding a fixed point solution which is considerably more difficult. However, it is necessary to show that the game produced after the ID alters the agents' rewards is still potential. The following lemma establishes that fact:
\begin{lemma}\label{potential preservation}The game $\mathcal{G}(\boldsymbol{w})$ is an MPG.
\end{lemma}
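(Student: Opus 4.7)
The approach is to construct a potential function for $\mathcal{G}(\boldsymbol{w})$ by augmenting the potential of the unmodified game (which is itself an MPG by the standing assumption of the paper) with an expected cumulative modifier term, and then to verify Definition \ref{potentialdef} directly. Since $R_{i,\boldsymbol{w}}=R_i+\Theta$, linearity of expectation produces a clean additive split
\[
v_i^{\boldpi,\boldsymbol{w}}(s)
=
v_i^{\boldpi}(s)
+
\mathbb{E}\Big[\sum_{t=0}^T \gamma_i^t\, \Theta(s_t,u_{i,t},u_{-i,t},\boldsymbol{w}) \,\Big|\, \boldpi, s_0=s\Big],
\]
so each agent's modified value equals the original value plus a purely $\Theta$-driven summand.

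The plan is then to define the candidate potential
\[
\Phi^{\boldpi,\boldsymbol{w}}(s)
\triangleq
\Phi_0^{\boldpi}(s)
+
\mathbb{E}\Big[\sum_{t=0}^T \gamma^t\, \Theta(s_t,\boldsymbol{u}_t,\boldsymbol{w}) \,\Big|\, \boldpi, s_0=s\Big],
\]
where $\Phi_0$ is the potential for the unmodified $\mathcal{G}$ and $\Theta$ is treated as a common bonus added, identically in form, to each agent's reward. For any unilateral deviation of agent $i$ from $\pi^i$ to $\pi'^i$ with $\pi^{-i}$ held fixed, the difference $v_i^{(\pi^i,\pi^{-i}),\boldsymbol{w}}(s) - v_i^{(\pi'^i,\pi^{-i}),\boldsymbol{w}}(s)$ decomposes into: (a) the $R_i$-contribution, which collapses to $\Phi_0^{(\pi^i,\pi^{-i})}(s) - \Phi_0^{(\pi'^i,\pi^{-i})}(s)$ by the MPG property of $\mathcal{G}$; and (b) the $\Theta$-contribution, which is exactly the change in the second summand of $\Phi^{\boldpi,\boldsymbol{w}}$ caused by the same deviation. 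Adding the two yields $\Phi^{(\pi^i,\pi^{-i}),\boldsymbol{w}}(s) - \Phi^{(\pi'^i,\pi^{-i}),\boldsymbol{w}}(s)$, i.e.\ the MPG identity, proving the lemma.

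The main obstacle is the consistent treatment of $\Theta$ when it genuinely depends on $u_{-i,t}$: the notation $\Theta(s,u_i,u_{-i},\boldsymbol{w})$ singles out agent $i$, but the same function is added to every agent's reward, so in order for the $\Theta$-increment in the potential to be well defined independently of which agent is deviating, one must identify $\Theta$ with a single common function of the joint action $\boldsymbol{u}$ (so that the value it contributes coincides from every agent's vantage point). Secondary technical care is needed in verifying that expectations over trajectories commute with the unilateral deviation argument --- this is standard in the MPG literature but worth spelling out because the kernel $P(\cdot\mid s,\boldsymbol{u})$ makes $u_{-i,t}$ depend on past $u_i$ through the state --- and in dealing with possibly heterogeneous discount factors $\gamma_i$, which is absorbed by the underlying MPG hypothesis that already guarantees $\Phi_0$ exists for the unmodified game.
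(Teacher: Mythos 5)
Your proof is correct and takes essentially the same route as the paper's: the paper likewise defines the modified potential as the original potential plus the (common) modifier contribution and argues that the additive form of $R_{i,\boldsymbol{w}}=R_i+\Theta$ makes the unilateral-deviation difference collapse to the difference of this augmented potential. Your value-level formulation with the expected discounted cumulative $\Theta$-term is just the correctly typed version of what the paper writes stage-wise, and your caveats (treating $\Theta$ as one common function of the joint action, and the implicit need for a common discount rate in the $\Theta$-summand) apply equally to the paper's own sketch.
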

\begin{proof}
To prove the assertion we need to show that the transformation $R_{i}\to R_{i,\boldsymbol{w}}$ preserves the potential game property.

For any function $\Xi:\mathcal{S}\times\mathcal{U}^i\times\mathcal{U}^{-i}$ define $\Delta \Xi\triangleq \Xi_{i,\boldsymbol{w}}(s_t,u_{i,t},u_{-i,t})-\Xi_{i,\boldsymbol{w}}(s_t,u_{i,t},u_{-i,t})$.
We claim that there exists a function ${\Phi}^{\boldpi,\boldsymbol{w}}(s)$ s.th. $\Delta R_{i,\boldsymbol{w}}(s_t,u_{i,t},u_{-i,t})={\Phi}^{\boldpi,\boldsymbol{w}}(s)$. This follows directly from the additive form of the reward function modification. Indeed, consider the function 
$
    {\Phi}^{\boldpi,\boldsymbol{w}}(s)
\triangleq
    \Phi^{\boldpi}(s)
    +
    \Theta(s, u^i, u^{-i},\boldsymbol{w})(s)
$.  
Since $\mathcal{G}_0$ is potential, by (3) and (4) we have that:
\begin{IEEEeqnarray}{rCl}
    \Delta R_{i,\boldsymbol{w}}(s_t,u_{i,t},u_{-i,t})
&=&
    \Delta R_{i}(s_t,u_{i,t},u_{-i,t})
    +
    \Delta\Theta(s_t,u_{i,t},u_{-i,t},\boldsymbol{w})
\notag\\
&=&
    \Delta{\Phi}^{\boldpi,\boldsymbol{w}}(s)
\end{IEEEeqnarray}
which completes the proof. 
\end{proof}
\begin{proposition}{S.2}\label{prop.S.2.}
$R_{\rm ID}$ is uniformly continuous in $\boldsymbol{w}$.\label{continuity_of_R_ma_prop}
\end{proposition}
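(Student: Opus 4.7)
The plan is to reduce the uniform continuity of $R_{\rm ID}$ in $\boldsymbol{w}$ to the uniform continuity of the modifier $\Theta$ in $\boldsymbol{w}$ (which holds by hypothesis, since $\boldsymbol{W}$ parameterises a space of uniformly continuous functions) together with uniform continuity of the downstream quantities on which $R_{\rm ID}$ depends. The argument splits along the two payoff types introduced earlier: trajectory-targeted and welfare-targeted.

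The first and main step is the auxiliary estimate that, for every fixed $\boldsymbol{\pi}\in\boldsymbol{\Pi}$ and $s\in\mathcal{S}$, the value function $v_i^{\boldsymbol{\pi},\boldsymbol{w}}(s)$ is uniformly continuous in $\boldsymbol{w}$, with modulus independent of $\boldsymbol{\pi}$, $s$, and $i$. Since the intrinsic reward $R_i$ does not depend on $\boldsymbol{w}$, linearity of expectation gives
\[
    \bigl| v_i^{\boldsymbol{\pi},\boldsymbol{w}}(s) - v_i^{\boldsymbol{\pi},\boldsymbol{w}'}(s) \bigr|
    \leq \mathbb{E}\Bigl[\sum_{t=0}^T \gamma_i^t \bigl| \Theta(s_t,u_{i,t},u_{-i,t},\boldsymbol{w}) - \Theta(s_t,u_{i,t},u_{-i,t},\boldsymbol{w}') \bigr|\Bigr],
\]
and bounding the integrand by the uniform modulus of continuity $\omega_\Theta$ of $\Theta$ (uniform over $(s,u^i,u^{-i})$ by the parametric-family hypothesis) and summing the geometric series yields $|v_i^{\boldsymbol{\pi},\boldsymbol{w}}(s)-v_i^{\boldsymbol{\pi},\boldsymbol{w}'}(s)|\leq \omega_\Theta(\|\boldsymbol{w}-\boldsymbol{w}'\|)/(1-\gamma_i)$, independently of $(\boldsymbol{\pi},s)$.

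Given this, the two payoff types follow quickly. In the trajectory-targeted case, $R_{\rm ID}(\boldsymbol{w},X^{\boldsymbol{\pi}},\zeta)$ depends on $\boldsymbol{w}$ only through its first argument, since the law of $X^{\boldsymbol{\pi}}$ is fixed by $\boldsymbol{\pi}$ and $P$; hence uniform continuity is immediate from the hypothesis on the parametric family, and is preserved by taking expectations over $X^{\boldsymbol{\pi}}$ and $\zeta$ via Jensen's inequality. In the welfare-targeted case, $R_{\rm ID}(\boldsymbol{w},h(v_a^{\boldsymbol{\pi},\boldsymbol{w}}),\zeta)$ is a composition of $R_{\rm ID}$ (uniformly continuous in each argument by assumption), $h$ (uniformly continuous by assumption), and $v_a^{\boldsymbol{\pi},\boldsymbol{w}}$ (uniformly continuous in $\boldsymbol{w}$ by the previous step). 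The main subtlety is that composition of uniformly continuous functions is not automatically uniformly continuous unless the inner map has bounded range; this is secured by noting that $v_a^{\boldsymbol{\pi},\boldsymbol{w}}$ lies in a fixed bounded set, because the Lipschitz intrinsic rewards $R_i$ are bounded on the compact state-action space, $\Theta$ is bounded on compact $\boldsymbol{W}$, and $\gamma_i<1$. Since the statement fixes $\boldsymbol{\pi}$, no Berge-type selection argument for the NE response $\boldsymbol{\pi}^*(\boldsymbol{w})$ is required at this stage, though it will be needed later when continuity in $\boldsymbol{w}$ alone is considered for the reduced map $J(\boldsymbol{w},\boldsymbol{\pi}^*(\boldsymbol{w}))$.
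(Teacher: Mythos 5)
Your welfare-targeted case follows essentially the paper's route: the paper likewise reduces the claim to continuity of the modified rewards in $\boldsymbol{w}$, propagates this to the value functions, and concludes by composing with $h$ and $R_{\rm ID}$; your explicit bound $\omega_\Theta(\|\boldsymbol{w}-\boldsymbol{w}'\|)/(1-\gamma_i)$, uniform over $(\boldsymbol{\pi},s,i)$, is in fact more careful than the paper's displayed inequality (which absorbs the value-function step into the constant $L'=L_{R_{\rm ID}}+L_h$) and mirrors the estimate the paper carries out separately in the proof of its essentiality proposition. The genuine divergence is the trajectory-targeted case. You fix $\boldsymbol{\pi}$, observe that the law of $X^{\boldsymbol{\pi}}$ is then independent of $\boldsymbol{w}$ (the kernel $P$ is unmodified), and conclude immediately from the parametric-family hypothesis. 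The paper instead proves continuity of $\boldsymbol{w}\mapsto\mathbb{E}\big[J(\boldsymbol{w},X^{\boldsymbol{\pi}(\boldsymbol{w})})\big]$ with the induced chain varying through the equilibrium response: it splits the increment as $c\|\boldsymbol{w}-\boldsymbol{w}_n\|+d\,|X^{\boldsymbol{\pi}(\boldsymbol{w})}-X^{\boldsymbol{\pi}(\boldsymbol{w}_n)}|$ and invokes convergence of the induced chains together with dominated convergence. Your reading is faithful to the literal statement and is self-contained; the paper's version is stronger and is what its later arguments (continuity of $NE\{\mathcal{G}(\boldsymbol{w})\}$ in $\boldsymbol{w}$ and existence of $\boldsymbol{w}^\star$ along the equilibrium response) actually lean on. You flag this and defer it to a Berge-type selection argument, which is a legitimate division of labour, but it means your proof of the proposition does not by itself substitute for the paper's Case II, where the convergence of $X^{\boldsymbol{\pi}(\boldsymbol{w}_n)}$ to $X^{\boldsymbol{\pi}(\boldsymbol{w})}$ is the real content.

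One small correction: the composition of two uniformly continuous maps is automatically uniformly continuous, with no boundedness of the inner range required; boundedness only matters if $R_{\rm ID}$ or $h$ were merely continuous. Your extra boundedness step is therefore harmless but the stated ``subtlety'' is not one.
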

The proof of the proposition is deferred to the appendix.
%
%
%
\begin{corollary}
\label{M-NE is optimal control problem}
The following expression holds
  \begin{gather}
   \left\{
         \arg\hspace{-0.38 mm}\max_{\hspace{-2.5 mm}\boldsymbol{\pi}\in \boldsymbol{\Pi}}\Phi^{\boldsymbol{\pi}, \boldsymbol{w}}(s),
         \forall s \in \mathcal{S}
     \right\}
     \subseteq
         NE\{\mathcal{G}(\boldsymbol{w})\}
 .
   \end{gather}
\end{corollary}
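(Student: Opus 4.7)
The plan is to derive the corollary as a direct consequence of Lemma \ref{potential preservation} together with Definition \ref{potentialdef}. Since $\mathcal{G}(\boldsymbol{w})$ is an MPG, there exists a potential function $\Phi^{\boldsymbol{\pi},\boldsymbol{w}}(s)$ whose differences coincide with differences in the individual value functions under unilateral deviations. I would then show that any joint policy that maximises $\Phi^{\boldsymbol{\pi},\boldsymbol{w}}$ at every state automatically satisfies the M-NE inequality from Definition~2.

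First, I would fix an arbitrary state $s\in\mathcal{S}$ and let $\boldsymbol{\pi}^\star=(\pi^{\star,i},\pi^{\star,-i})$ be any element of the argmax set appearing on the left-hand side of the claimed inclusion. For each agent $i\in\mathcal{N}$ and any unilateral deviation $\pi'^{i}\in\Pi^{i}$, the joint policy $(\pi'^{i},\pi^{\star,-i})$ is a valid element of $\boldsymbol{\Pi}$, so by definition of the argmax
\[
\Phi^{(\pi^{\star,i},\pi^{\star,-i}),\boldsymbol{w}}(s)\ \geq\ \Phi^{(\pi'^{i},\pi^{\star,-i}),\boldsymbol{w}}(s).
\]
Next, I would invoke the MPG identity from Lemma \ref{potential preservation} applied to agent $i$, which gives
\[
v_{i}^{(\pi^{\star,i},\pi^{\star,-i}),\boldsymbol{w}}(s)-v_{i}^{(\pi'^{i},\pi^{\star,-i}),\boldsymbol{w}}(s) = \Phi^{(\pi^{\star,i},\pi^{\star,-i}),\boldsymbol{w}}(s)-\Phi^{(\pi'^{i},\pi^{\star,-i}),\boldsymbol{w}}(s)\ \geq\ 0.
\]
Rearranging yields $v_{i}^{(\pi^{\star,i},\pi^{\star,-i}),\boldsymbol{w}}(s)\geq v_{i}^{(\pi'^{i},\pi^{\star,-i}),\boldsymbol{w}}(s)$, which is exactly the M-NE inequality for agent $i$ at state $s$. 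Since $i$, $\pi'^{i}$ and $s$ were arbitrary, $\boldsymbol{\pi}^\star\in NE\{\mathcal{G}(\boldsymbol{w})\}$, proving the desired set inclusion.

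I do not expect a substantive obstacle here: the corollary is essentially the standard fact that maximisers of a potential function are Nash equilibria, and the only non-trivial ingredient (that the modified game remains potential) has already been supplied by Lemma \ref{potential preservation}. The one subtle point worth flagging is that the argmax on the left is taken pointwise in $s$, so one should observe that the inequality transferred from $\Phi$ to $v_{i}$ holds for each fixed state separately, matching the per-state form of the M-NE definition. A brief remark noting that the reverse inclusion need not hold (since Nash equilibria may correspond to local rather than global maxima of $\Phi$, which is why only $\subseteq$ is asserted) would also be appropriate.
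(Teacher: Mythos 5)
Your proof is correct and fills in exactly the argument the paper leaves implicit: the corollary is presented as an immediate consequence of Lemma~\ref{potential preservation} (the modified game is an MPG) together with the potential identity of Definition~\ref{potentialdef}, which is precisely the route you take. The per-state remark and the observation that only the inclusion $\subseteq$ (not equality) is claimed are both consistent with the paper's treatment.
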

Cor. \ref{M-NE is optimal control problem} expresses that in playing their best-response strategies $\mathcal{G}(\boldsymbol{w})$, each agent inadvertently maximises $\Phi^{\boldsymbol{\pi},\boldsymbol{w}}$, so the function $\Phi^{\boldsymbol{\pi},\boldsymbol{w}}$ is a potential of $\mathcal{G}(\boldsymbol{w})$. 
 
Having reduced the problem of finding $NE\{\mathcal{G}(\boldsymbol{w}\}$ to an optimal control problem, we now establish that the ID's problem is a \emph{constrained optimisation problem}:
\begin{theorem}\label{constrained optim problem A}
ID's problem is equivalent to:
\begin{IEEEeqnarray}{rCl}
	\begin{aligned}
		\underset{ 
		    \boldsymbol{w} \in \boldsymbol{W}, 
		    \boldsymbol{\pi} \in \boldsymbol{\Pi} }{\rm maximise} 
			\; 	
				J(\boldsymbol{w},\boldsymbol{\pi})\;\;
{\rm s.t.} 	\;	 	 \nabla_{\pi} \Phi^{\boldsymbol{\pi}, \boldsymbol{w}}=0
                ,\;\;   \nabla^2 \Phi^{\boldsymbol{\pi}, \boldsymbol{w}} \preceq 0 \label{constraints1}.
	\end{aligned}
\qquad
\label{optimisation_objective}
\end{IEEEeqnarray}
\end{theorem}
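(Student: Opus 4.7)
The plan is to convert the Markov--Nash equilibrium constraint in the ID problem \eqref{problemA} into the first- and second-order differential conditions on the potential function, by chaining together Lemma \ref{potential preservation} and Corollary \ref{M-NE is optimal control problem}.

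First, I would invoke Lemma \ref{potential preservation} to ensure that $\mathcal{G}(\boldsymbol{w})$ is an MPG for every $\boldsymbol{w}\in\boldsymbol{W}$, so that a potential function $\Phi^{\boldsymbol{\pi},\boldsymbol{w}}$ is well defined and satisfies the unilateral-difference identity \eqref{potentialcondition}. By Corollary \ref{M-NE is optimal control problem}, any $\boldsymbol{\pi}$ that maximises $\Phi^{\boldsymbol{\pi},\boldsymbol{w}}$ in $\boldsymbol{\pi}$ lies in $NE\{\mathcal{G}(\boldsymbol{w})\}$; conversely, at any M-NE the identity \eqref{potentialcondition} shows that no agent-wise deviation can strictly increase $\Phi^{\boldsymbol{\pi},\boldsymbol{w}}$, so $\boldsymbol{\pi}$ is a local maximiser of $\Phi^{\boldsymbol{\pi},\boldsymbol{w}}$ with respect to $\boldsymbol{\pi}$. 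The M-NE constraint therefore reduces to the statement that $\boldsymbol{\pi}$ locally maximises $\Phi^{\boldsymbol{\pi},\boldsymbol{w}}$.

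Second, using smoothness of the modifier $\Theta$ and of the policy parameterisation, I would replace the local-maximality condition by its standard second-order characterisation: the gradient vanishes, $\nabla_{\pi} \Phi^{\boldsymbol{\pi},\boldsymbol{w}} = 0$, and the Hessian is negative semidefinite, $\nabla^2 \Phi^{\boldsymbol{\pi},\boldsymbol{w}} \preceq 0$. Substituting these two conditions in place of the NE constraint in \eqref{problemA}, and observing that the objective $J(\boldsymbol{w},\boldsymbol{\pi})$ is unchanged, yields the equivalent programme \eqref{optimisation_objective}.

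The main obstacle will be the gap between the agent-wise local optimality guaranteed by the NE definition and the joint second-order condition appearing in \eqref{optimisation_objective}. The NE condition directly yields negative semidefiniteness of each diagonal block of the Hessian of $\Phi^{\boldsymbol{\pi},\boldsymbol{w}}$ (the block corresponding to $\pi^i$ for fixed $\pi^{-i}$), but joint negative semidefiniteness of the full Hessian is strictly stronger. I would reconcile this by interpreting the SOC as a regularity condition that selects those M-NE at which the ID's bilevel programme is well posed: any critical point of $\Phi^{\boldsymbol{\pi},\boldsymbol{w}}$ not satisfying this condition is not a joint local maximiser and so cannot be reached by the equilibrium-selection procedure implicit in the MARL inner loop. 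Combined with the uniform continuity of $J$ established in Proposition \ref{continuity_of_R_ma_prop}, no ID optimum is lost in passing between the two formulations.
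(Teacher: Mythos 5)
Your route is the same one the paper takes: invoke Lemma~\ref{potential preservation} so that $\Phi^{\boldsymbol{\pi},\boldsymbol{w}}$ is well defined, use Corollary~\ref{M-NE is optimal control problem} to relate $NE\{\mathcal{G}(\boldsymbol{w})\}$ to maximisers of $\Phi^{\boldsymbol{\pi},\boldsymbol{w}}$, and then trade the equilibrium constraint in (\ref{problemA}) for the first- and second-order conditions (\ref{constraints1}). One technical divergence: the paper does not assume a smooth policy parameterisation; it gets differentiability of $\Phi$ from its Lipschitz continuity via Rademacher's lemma on an open subset of the domain. Your appeal to ``smoothness of the modifier and of the policy parameterisation'' replaces that step without justification, and some such argument is needed before $\nabla_{\pi}\Phi^{\boldsymbol{\pi},\boldsymbol{w}}$ and $\nabla^{2}\Phi^{\boldsymbol{\pi},\boldsymbol{w}}$ can even appear in the constraints.

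The more substantive problem is your final paragraph, which does not close the gap you correctly diagnose. The theorem asserts an equivalence of two optimisation problems, so the direction ``every M-NE feasible for (\ref{problemA}) satisfies (\ref{constraints1})'' must be argued about the feasible sets themselves; saying that critical points violating the joint second-order condition ``cannot be reached by the MARL inner loop'' is a statement about an algorithm, not about the programmes, and it proves nothing about equivalence. What Definition~2 together with the potential identity (\ref{potentialcondition}) actually gives is that no \emph{unilateral} deviation increases $\Phi$, i.e.\ block-wise stationarity and block-wise negative semidefiniteness of the Hessian, exactly as you note; joint negative semidefiniteness of $\nabla^{2}\Phi^{\boldsymbol{\pi},\boldsymbol{w}}$ does not follow, and your earlier sentence inferring that an M-NE ``is a local maximiser of $\Phi$'' is therefore unjustified. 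To be fair, the paper's own sketch glosses over the same point by asserting that $NE\{\mathcal{G}(\boldsymbol{w})\}$ coincides with the local maxima of $\Phi$, even though Corollary~\ref{M-NE is optimal control problem} only states the inclusion of maximisers of $\Phi$ into the NE set. So your proposal reproduces the paper's argument up to that step, but the device you introduce to finish it is not a proof: either the claimed equivalence must be read as holding over the subset of equilibria that are local maxima of $\Phi$ (which is all the corollary supports), or an actual argument for the reverse inclusion has to be supplied.
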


\begin{hproof}
The proof of the theorem consists of the following components; proving that $\Phi\in\mathcal{C}^1$ and that ID's problem can be rewritten as a constrained optimisation problem and the set of constraints of the problem are expressed by (\ref{constraints1}). By \emph{Rademacher's lemma} we have that if $\Phi$ is Lipschitz continuous on some open subset of its domain then $\Phi$ is differentiable almost everywhere (in that set). Since the $\Phi$ is defined over $\mathcal{S}\subseteq\mathbb{R}^p$, we can construct an open subset for which Rademacher's lemma holds. To deduce the remainder of the theorem, we note that by Corollary \ref{M-NE is optimal control problem}, $NE\{\mathcal{G}(\boldsymbol{w})\}$ coincide with the set of the local maxima of $\Phi$. The result then follows by noting that conditions (\ref{constraints1}) are first and second order conditions for local maxima of the function $\Phi$.
\end{hproof}

Theorem \ref{constrained optim problem A} establishes that the ID's problem reduces to a constrained optimisation problem where the feasibility set is given by the set of points that are local maxima of $\Phi$. In the next section, we show that we can apply MARL to constrain the set of points in $\boldsymbol{W}$ to lie within the feasibility set.

We now prove that $NE\{\mathcal{G}(\boldsymbol{w})\}$ is continuous on $\boldsymbol{w}$ --- this enables the ID to generate an iterative sequence of games and permits use of black-box optimisation to solve the ID's problem. We firstly study the effect of modifying {$\boldsymbol{w}$} on $NE\{\mathcal{G}(\boldsymbol{w})\}$. To establish a formal notion of continuity of $NE\{\mathcal{G}(\boldsymbol{w})\}$ w.r.t $\boldsymbol{w}$, we introduce \emph{essentiality}:
\begin{definition}
%
Given metric space $\mathbf{X}$, let
$
    B_{\alpha}(\boldsymbol{x})
\triangleq 
    \{
        \boldsymbol{y}\in \mathbf{X} 
        :
        \|\boldsymbol{x}-\boldsymbol{y} \|
        <
        \alpha
    \}
$
denote the open ball with radius $\alpha>0$ around $\boldsymbol{x}\in \mathbf{X}$. Then $\boldsymbol{x}\in NE\{\mathcal{G}(\boldsymbol{w})\}$ is \textbf{essential} in $\boldsymbol{w}$ if for any $\epsilon>0$, 
$\exists \delta>0 : $ 
$
    \boldsymbol{w}'\in B_\epsilon(\boldsymbol{w}) 
\implies 
    \boldsymbol{x}'\in B_\delta(\boldsymbol{x})
$,
for any $\boldsymbol{x}'\in NE\{\mathcal{G}(\boldsymbol{w}')\}$.
\end{definition}
 The following results establish the continuity in ID's reward under changes in $\boldsymbol{w}$ which underpin the existence of a solution for ID's problem and a method for computing the solution. We begin by demonstrating that small changes in ID's action lead to small changes in the game, that is, the game itself is continuous in $\boldsymbol{w}$.
 \begin{proposition}\label{essentiality proposition}$NE\{\mathcal{G}(\boldsymbol{w})\}$ is an essential set in $\boldsymbol{w}$.\end{proposition}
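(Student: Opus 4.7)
The plan is to reduce the essentiality of $NE\{\mathcal{G}(\boldsymbol{w})\}$ to upper hemicontinuity of the NE correspondence $\boldsymbol{w}\mapsto NE\{\mathcal{G}(\boldsymbol{w})\}$, and then obtain the latter by combining the MPG structure from Lemma~\ref{potential preservation} with a continuity-of-maxima argument of Berge/Wu--Jiang type. Lemma~\ref{potential preservation} guarantees that $\Phi^{\boldsymbol{\pi},\boldsymbol{w}}$ is a potential of $\mathcal{G}(\boldsymbol{w})$ for every $\boldsymbol{w}$, and Cor.~\ref{M-NE is optimal control problem} together with Theorem~\ref{constrained optim problem A} identify $NE\{\mathcal{G}(\boldsymbol{w})\}$ with the set of (local) maxima of $\Phi^{\cdot,\boldsymbol{w}}$ characterised by the first- and second-order conditions $\nabla_{\pi}\Phi^{\boldsymbol{\pi},\boldsymbol{w}}=0$ and $\nabla^2_{\pi}\Phi^{\boldsymbol{\pi},\boldsymbol{w}}\preceq 0$. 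Consequently, any continuity property of $\Phi^{\cdot,\boldsymbol{w}}$ in $\boldsymbol{w}$ will transfer to a continuity property of the NE set.

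First I would establish joint continuity of $(\boldsymbol{\pi},\boldsymbol{w})\mapsto\Phi^{\boldsymbol{\pi},\boldsymbol{w}}(s)$. Continuity in $\boldsymbol{\pi}$ for fixed $\boldsymbol{w}$ follows from the Lipschitz hypothesis on each $R_i$ together with a standard bounded-convergence argument for discounted expectations (using $\gamma_i<1$ to control the tail of the infinite sum). Continuity in $\boldsymbol{w}$ is the new ingredient and exploits the standing assumption that $\Theta(\cdot,\boldsymbol{w})$ lies in a parametric family of uniformly continuous functions of $\boldsymbol{w}$: by dominated convergence, the discounted cumulative modifier $\mathbb{E}\bigl[\sum_{t=0}^T\gamma_i^t\Theta(s_t,u_{i,t},u_{-i,t},\boldsymbol{w})\bigr]$ inherits uniform continuity in $\boldsymbol{w}$ from $\Theta$, and the unperturbed part $\Phi^{\boldsymbol{\pi}}$ is independent of $\boldsymbol{w}$. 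Prop.~\ref{prop.S.2.} already records a related continuity fact and can be invoked for $R_{\rm ID}$, but here we need the analogous statement for $\Phi$.

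Second, I would apply Berge's Maximum Theorem on the compact product space $\boldsymbol{\Pi}\times\boldsymbol{W}$. Compactness of $\boldsymbol{\Pi}$ comes from the product of the compact sets of stochastic kernels on each compact action set $\mathcal{U}^i$; the constraint correspondence $\boldsymbol{w}\mapsto\boldsymbol{\Pi}$ is constant and hence trivially continuous. Joint continuity of $\Phi^{\boldsymbol{\pi},\boldsymbol{w}}$ from the previous step then gives that the argmax correspondence is non-empty, compact-valued and upper hemicontinuous, and the same continuity of the gradient/Hessian of $\Phi$ in $\boldsymbol{w}$ propagates to the local-maxima set characterising $NE\{\mathcal{G}(\boldsymbol{w})\}$. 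Translating upper hemicontinuity through the quantifier form in the definition yields: for any $\boldsymbol{x}\in NE\{\mathcal{G}(\boldsymbol{w})\}$ and any open ball $B_\delta(\boldsymbol{x})$ there exists an open ball $B_\epsilon(\boldsymbol{w})$ such that $NE\{\mathcal{G}(\boldsymbol{w}')\}\subseteq B_\delta(\boldsymbol{x})$ for every $\boldsymbol{w}'\in B_\epsilon(\boldsymbol{w})$, which is the essentiality condition.

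The hard part will be the passage from the global-argmax conclusion of the Berge theorem to the full set of local maxima that actually defines the NE in an MPG, together with the topological bookkeeping for $\boldsymbol{\Pi}$ when the action sets are continuous. In the finite-action case, $\Pi^i$ is a finite-dimensional simplex and the local-max set moves continuously with $\boldsymbol{w}$ by an implicit-function argument applied to the conditions $\nabla_{\pi}\Phi=0$, $\nabla^2_{\pi}\Phi\preceq 0$ from Theorem~\ref{constrained optim problem A}. In the general case, one must equip $\Pi^i$ with the topology of weak convergence of conditional measures and verify continuity of the discounted-return functional in that topology, which is standard but needs to be stated carefully so that it interacts correctly with the Berge-type argument.
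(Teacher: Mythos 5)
Your route is genuinely different from the paper's: the paper never invokes Berge's Maximum Theorem. Its proof derives a quantitative Lipschitz bound $\bigl|v_i^{\cdot,\boldsymbol{w}}(s)-v_i^{\cdot,\boldsymbol{w}'}(s)\bigr|\leq c\|\boldsymbol{w}-\boldsymbol{w}'\|$ with $c$ of order $L_R(1-\gamma)^{-1}$ directly from the Bellman expansion of the value functions, transfers this uniform closeness to the potential via the exact-potential identity of Definition~\ref{potentialdef}, and then concludes with Lemma~\ref{lemma_s1} (stability of maxima under uniform perturbation of the objective). So the paper's argument is a perturbation estimate on $\Phi$, not a hemicontinuity argument on an argmax correspondence, and it requires no joint continuity in $\boldsymbol{\pi}$ and no compactness machinery beyond what Theorem~\ref{first existence theorem} already uses.

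The substantive problem with your plan is the step you yourself flag and then do not close. Berge's theorem controls only the \emph{global} argmax correspondence $\boldsymbol{w}\mapsto\arg\max_{\boldsymbol{\pi}}\Phi^{\boldsymbol{\pi},\boldsymbol{w}}$, which by Corollary~\ref{M-NE is optimal control problem} is merely a subset of $NE\{\mathcal{G}(\boldsymbol{w})\}$; the proposition is about the full M-NE set, i.e.\ the local maxima characterised by $\nabla_{\pi}\Phi=0$, $\nabla^2\Phi\preceq 0$. Joint continuity of $\Phi$ does \emph{not} give upper hemicontinuity of the local-maximum set: take $f(x,w)=-wx$ on $[0,1]$; for $w>0$ the local maxima are $\{0\}$, while at $w=0$ every point is a (weak) local maximum, so the set explodes under an arbitrarily small parameter change. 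Your proposed repair via the implicit function theorem applied to $\nabla_\pi\Phi=0$ needs a nondegenerate (strictly negative definite) Hessian, which is neither assumed in the paper nor implied by the constraint $\nabla^2\Phi\preceq 0$, so degenerate equilibria are exactly the case your argument cannot handle. There is also a smaller quantifier mismatch: upper hemicontinuity yields containment of $NE\{\mathcal{G}(\boldsymbol{w}')\}$ in neighbourhoods of the whole set $NE\{\mathcal{G}(\boldsymbol{w})\}$, whereas the essentiality definition used here asks for containment in a ball around a given equilibrium $\boldsymbol{x}$, so that translation needs an explicit argument as well. A uniform-perturbation bound on $\Phi$ in $\boldsymbol{w}$, as in the paper, is the missing ingredient that sidesteps both issues.
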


\begin{proof}
We begin the proof by proving that the value function for each agent $i\in\mathcal{N}$ is Lipschitz continuous w.r.t. $\boldsymbol{w}$:
\begin{IEEEeqnarray}{rCl}
&&\Big|
    v_i^{(\pi^{\star}_i,\pi^{\star}_{-i}),\boldsymbol{w}}
    (s_t)
    -
    v_i^{(\pi^{\star}_i,\pi^{\star}_{-i}),\boldsymbol{w'}}(s_t)
\Big|
=
    \big|
        \mathbb{E}
        \big[
            \max_{\pi\in\Pi}[R_i(s_t,u_{i,t},u_{-i,t},\boldsymbol{w})
\notag\\ 
&&\;\:   \qquad         +
\:
            \gamma\sum_{s'\in S}p(s'|s,\boldsymbol{a})v_i^{(\pi_i^{\star},\pi_{-i}^{\star})}(s',u_{i,t},u_{-i,t},\boldsymbol{w})
            \big|
                \boldsymbol{u}_t \sim \boldpi(\cdot| s_t)
        \big]
    \big|
\notag\\ 
&&\;\: \qquad
-
\:
    \big|
    \mathbb{E}
        \big[
            \max_{\pi\in\Pi}[R_i(s_t,u_{i,t},u_{-i,t},\boldsymbol{w'})
\notag\\ 
&&\;\:       \qquad     +
\:
            \gamma\sum_{s'\in S}
            P(s'|s,\boldsymbol{a})v_i^{(\pi_i^{\star},\pi_{-i}^{\star})}(s',u_{i,t},u_{-i,t},\boldsymbol{w'})
            \big|
            \boldsymbol{u}_t \sim \boldpi(\cdot| s_t)
        \big]
    \big|
\notag\\
&&\;\: 
\leq
\:
    \max_{\boldpi\in\Pi}
    \left|
    \mathbb{E}
        \left[
            R_i(s_t,u_{i,t},u_{-i,t},\boldsymbol{w})-R_i(s_t,u_{i,t},u_{-i,t},\boldsymbol{w'})
            \big|
                \boldsymbol{u}_t \sim \boldpi(\cdot| s_t)
        \right]
    \right|
\notag\\ 
&&\;\: \qquad
    +
\:
    \gamma\sum_{s'\in S}
    P(s'|s,\boldsymbol{u})
    \big|
        \mathbb{E}_{\boldpi}
        \big[
            v_i^{(\pi_i^{\star},\pi_{-i}^{\star})}(s',u_{i,t},u_{-i,t},\boldsymbol{w})
\notag\\
&&\;\:  \qquad
    -
\:
            v_i^{(\pi_i^{\star},\pi_{-i}^{\star})}(s',u_{i,t},u_{-i,t},\boldsymbol{w})
            \big|
            \boldsymbol{u}_t \sim \boldpi(\cdot| s_t)
        \big]
    \big|
\end{IEEEeqnarray}
Recall that $\gamma<1$, we therefore find that
\begin{IEEEeqnarray}{rCl}
\Big|
    v_i^{(\pi^{\star}_i,\pi^{\star}_{-i}),\boldsymbol{w}}
&&
    (s_t)
    -
    v_i^{(\pi^{\star}_i,\pi^{\star}_{-i}),\boldsymbol{w'}}(s_t)
\Big|
\notag\\
&&
\leq
\:
    (1-\gamma)^{-1}
    \max_{\boldpi\in\Pi}
    \big|
        \mathbb{E}_{\boldpi}
        \big[
            R_i(s_t,u_{i,t},u_{-i,t},\boldsymbol{w})
\notag\\
&& \;
-
\:
            R_i(s_t,u_{i,t},u_{-i,t},\boldsymbol{w'})
            \big|
            \boldsymbol{u}_t \sim \boldpi(\cdot| s_t)
        \big]
    \big|
\leq
\:
    c\|\boldsymbol{w}-\boldsymbol{w'}\|\nonumber
,
\end{IEEEeqnarray}
where $c\triangleq L_R (1+\gamma)^{-1}$ and $L_R>0$ denotes the Lipschitz constant for the function $R_i$, which proves that the function $v_i^{\cdot,\boldsymbol{w}}$ is Lipschtzian in $\boldsymbol{w}$. Hence, \textit{a fortiori}, the function $v_i^{\cdot,\boldsymbol{w}}$ is uniformly continuous w.r.t. $\boldsymbol{w}$ hence we have that $\forall \epsilon>0 \exists \delta>0$ s.th $\|\boldsymbol{w}-\boldsymbol{w'}\|<\epsilon \implies |v_i^{(\pi^{\star}_i,\pi^{\star}_{-i}),\boldsymbol{w}}(\cdot)-v_i^{(\pi^{\star}_i,\pi^{\star}_{-i}),\boldsymbol{w'}}(\cdot)|<\delta$. The remainder of the proof follows thanks to the potential property in Definition \ref{potentialcondition} and Lemma \ref{lemma_s1}.
\end{proof}
To solve the ID's problem, it is necessary to establish the existence of an optimal reward modifier $\boldsymbol{w^\star}\in\boldsymbol{W}$ that solves ID's problem, i.e. 
$\boldsymbol{w^\star}\in\boldsymbol{W}$ maximises $J (\boldsymbol{w},\boldsymbol{\pi})$ and thus induces an efficient NE.

\begin{theorem}\label{first existence theorem}
For $\mathcal{G}(\boldsymbol{w})$ there exists a value $\boldsymbol{w^\star}\in \boldsymbol{W}$ that maximises ID's reward function $R_{\rm ID}$.\end{theorem}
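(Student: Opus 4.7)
The plan is to combine the continuity of $R_{\rm ID}$ in $\boldsymbol{w}$ (Proposition \ref{prop.S.2.}) with the essentiality of the NE correspondence (Proposition \ref{essentiality proposition}) to reduce the existence question to an application of the Weierstrass extreme value theorem on a compact feasible set, via Berge's maximum theorem.

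First I would observe that $\boldsymbol{W}\subset\mathbb{R}^l$ is a compact parameter set and that the policy set $\boldsymbol{\Pi}$ is compact (as a product of simplices over the compact action spaces). Then, by Lemma \ref{potential preservation} and Corollary \ref{M-NE is optimal control problem}, for every $\boldsymbol{w}\in\boldsymbol{W}$ the set $NE\{\mathcal{G}(\boldsymbol{w})\}$ is nonempty since it contains the global maximiser of the continuous potential $\Phi^{\boldsymbol{\pi},\boldsymbol{w}}$ over the compact set $\boldsymbol{\Pi}$, and it is closed (hence compact-valued) because it is the preimage of the local-maxima conditions of a $\mathcal{C}^1$ function.

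Next I would invoke Proposition \ref{essentiality proposition} to conclude that $\boldsymbol{w}\mapsto NE\{\mathcal{G}(\boldsymbol{w})\}$ is continuous (both upper and lower hemicontinuous in the Hausdorff sense): essentiality directly supplies the neighbourhood-to-neighbourhood behaviour needed. Combining this with the uniform continuity of $R_{\rm ID}$ in $\boldsymbol{w}$ and the Lipschitz continuity of $J$ (which is already part of the problem statement), the hypotheses of Berge's maximum theorem are met. The induced marginal function
\[
V(\boldsymbol{w})\triangleq\max_{\boldsymbol{\pi}\in NE\{\mathcal{G}(\boldsymbol{w})\}}J(\boldsymbol{w},\boldsymbol{\pi})
\]
is therefore continuous on the compact set $\boldsymbol{W}$, and Weierstrass' theorem delivers a maximiser $\boldsymbol{w}^\star\in\boldsymbol{W}$ attaining $V$; this is the required reward modifier.

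The main obstacle will be the continuity of the NE correspondence. Upper hemicontinuity follows relatively directly from essentiality plus the closed-graph property of the argmax of the continuous potential $\Phi^{\boldsymbol{\pi},\boldsymbol{w}}$, but lower hemicontinuity is more delicate: one must verify that every NE at $\boldsymbol{w}$ is the limit of some sequence of NE at perturbations $\boldsymbol{w}'\to\boldsymbol{w}$. Here the Lipschitz bound on $v_i^{\boldsymbol{\pi},\boldsymbol{w}}$ derived in the proof of Proposition \ref{essentiality proposition}, together with Lemma \ref{potential preservation} (which ensures the local-maxima structure of $\Phi$ survives the perturbation), is what I would lean on to construct such approximating sequences. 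Once this has been secured, the Berge–Weierstrass step is immediate.
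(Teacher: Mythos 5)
Your conclusion is right and all of your ingredients come from the paper, but your route is genuinely different from --- and considerably heavier than --- the one the paper takes. The paper's own argument is a direct Weierstrass argument: by Prop.~\ref{prop.S.2.} the ID's payoff is (Lipschitz) continuous in $\boldsymbol{w}$, and since $\boldsymbol{W}$ and $\boldsymbol{\Pi}$ are compact, the image of $J$ is compact and the extreme value theorem immediately yields $\boldsymbol{w}^\star$; the NE correspondence never enters. You instead treat the bilevel structure explicitly, defining the marginal value $V(\boldsymbol{w})=\max_{\boldsymbol{\pi}\in NE\{\mathcal{G}(\boldsymbol{w})\}}J(\boldsymbol{w},\boldsymbol{\pi})$ and running Berge's maximum theorem, using Lemma~\ref{potential preservation} and Corollary~\ref{M-NE is optimal control problem} to get nonempty compact values and Proposition~\ref{essentiality proposition} for continuity of the correspondence. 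That is a more faithful treatment of the constrained problem~(\ref{problemA}) --- the paper's sketch silently maximises over $\boldsymbol{W}\times\boldsymbol{\Pi}$ without tracking the equilibrium constraint --- so your version buys a statement that genuinely respects the feasibility set, at the price of needing regularity of $\boldsymbol{w}\mapsto NE\{\mathcal{G}(\boldsymbol{w})\}$ that the paper never establishes in that form.

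The step you yourself flag as delicate is indeed the soft spot: lower hemicontinuity does not follow from Proposition~\ref{essentiality proposition}. Essentiality, as defined in the paper, controls where equilibria of the perturbed games $\mathcal{G}(\boldsymbol{w}')$ lie relative to an equilibrium of $\mathcal{G}(\boldsymbol{w})$; this is upper-hemicontinuity-type control, not the statement that \emph{every} equilibrium of $\mathcal{G}(\boldsymbol{w})$ is a limit of equilibria of nearby games, and for argmax sets of a potential the latter can genuinely fail (argmax correspondences are uhc but not lhc in general). Fortunately you do not need it: for mere existence of $\boldsymbol{w}^\star$ it suffices that $V$ be upper semicontinuous, which already follows from compact values, upper hemicontinuity and continuity of $J$, and an usc function on the compact set $\boldsymbol{W}$ attains its maximum. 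If you drop the lhc claim and the full Berge machinery in favour of this usc argument --- or simply follow the paper and apply the extreme value theorem to $J$ on $\boldsymbol{W}\times\boldsymbol{\Pi}$ --- the proof closes.
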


\begin{hproof}
First we note that by Prop. \ref{prop.S.2.}, we note that the function $J$ is Lipschitz continuous w.r.t. the variable $\boldsymbol{w}$.

The proof then follows from the compactness of $\boldsymbol{\Pi}, \boldsymbol{W}$ and the continuity of $J$, indeed since $J$ is a continuous map from compact sets by the properties of continuous maps we can deduce that the image of $J$ is compact. Moreover, by extreme value theorem we deduce the existence of a maximum value of $\boldsymbol{w}$ within the set $\boldsymbol{W}$.
\end{hproof} 
\noindent Previous results hold for an arbitrarily expressive modifier function $\Theta$. In practice, it is computationally efficient to express $\Theta$ using a representation with few parameters. The following bounds ID's loss when $\Theta$ is approximated by a truncated power series: 
\begin{theorem}
Let $\boldsymbol{w^\epsilon}(n)\in \boldsymbol{W}$ approximate solution to ID's problem for $\mathcal{G}(\boldsymbol{w})$ which is generated by an $n-$order series expansion, define ID's approximation loss by 
$
    \mathcal{L}
\triangleq 
    J(\boldsymbol{w^\star},\boldsymbol{\pi}) 
    -
    J(\boldsymbol{w}^\epsilon(n),\boldsymbol{\pi})
$ ,
then $\mathcal{L}$ is subject to the following bound:
$\mathcal{L}\leq\hspace{-5 mm}\underset{ 
		    \boldsymbol{w'} \in \boldsymbol{W}, 
		    \boldsymbol{\pi'} \in \boldsymbol{\Pi} }{\rm max} \big|D^{N+1}J(\boldsymbol{w'},\boldsymbol{\pi}(\boldsymbol{w' }))\big|.$		    
\end{theorem}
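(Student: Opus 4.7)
The plan is to view this as an application of Taylor's theorem with the Lagrange form of the remainder, applied to the composed objective $\boldsymbol{w}\mapsto J(\boldsymbol{w},\boldsymbol{\pi}(\boldsymbol{w}))$ where $\boldsymbol{\pi}(\boldsymbol{w})$ denotes the equilibrium policy induced by $\boldsymbol{w}$. Since the approximate parameter $\boldsymbol{w}^{\epsilon}(n)$ is obtained by truncating a power series representation of $\Theta$ at order $n$, the induced deviation of the composed objective from its optimum admits a standard Taylor remainder bound whose controlling quantity is the $(N{+}1)$-th differential of $J$.

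First I would set up smoothness. By Proposition \ref{prop.S.2.}, $R_{\rm ID}$ (and hence $J$) is uniformly continuous in $\boldsymbol{w}$; combined with the Lipschitz continuity of $v_i^{\cdot,\boldsymbol{w}}$ in $\boldsymbol{w}$ established in the proof of Proposition \ref{essentiality proposition}, plus the assumption that $\Theta$ is drawn from a space of parametric uniformly continuous functions admitting a convergent power series representation, one obtains that $J$ is $(N{+}1)$-times differentiable in $\boldsymbol{w}$ on the compact set $\boldsymbol{W}$. The Lipschitzian dependence of equilibrium value functions on $\boldsymbol{w}$ (already used in Proposition \ref{essentiality proposition}) lifts to a Lipschitz dependence of the essential selection $\boldsymbol{w}\mapsto\boldsymbol{\pi}(\boldsymbol{w})$, so the composition inside $J$ inherits the required regularity.

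Next I would invoke Taylor's theorem. Expand $J(\boldsymbol{w},\boldsymbol{\pi}(\boldsymbol{w}))$ as a function of $\boldsymbol{w}$ about $\boldsymbol{w}^{\epsilon}(n)$ to order $N$. Since $\boldsymbol{w}^{\epsilon}(n)$ is the parameter vector obtained by truncating the series representation of the optimal modifier at order $n$, the first $n$ partial derivatives of the composed objective at $\boldsymbol{w}^{\star}$ and at $\boldsymbol{w}^{\epsilon}(n)$ agree through the order of the truncation, so the leading surviving term in the expansion $J(\boldsymbol{w}^{\star},\boldsymbol{\pi})-J(\boldsymbol{w}^{\epsilon}(n),\boldsymbol{\pi})$ is precisely the $(N{+}1)$-th order Lagrange remainder
\begin{equation*}
    \mathcal{L} = \frac{1}{(N+1)!}\, D^{N+1}J\bigl(\boldsymbol{\xi},\boldsymbol{\pi}(\boldsymbol{\xi})\bigr)
\end{equation*}
for some intermediate $\boldsymbol{\xi}\in\boldsymbol{W}$ on the segment connecting $\boldsymbol{w}^{\star}$ and $\boldsymbol{w}^{\epsilon}(n)$. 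Bounding the remainder by taking a supremum over the compact domain $\boldsymbol{W}\times\boldsymbol{\Pi}$ then yields the stated inequality.

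The main obstacle I anticipate is the differentiability of the implicit selection $\boldsymbol{w}\mapsto\boldsymbol{\pi}(\boldsymbol{w})$ up to order $N{+}1$: Proposition \ref{essentiality proposition} only guarantees continuity (and the proof only constructs Lipschitz dependence of value functions), so higher differentiability must be justified via a local application of the implicit function theorem to the first-order optimality condition $\nabla_{\pi}\Phi^{\boldsymbol{\pi},\boldsymbol{w}}=0$ from Theorem \ref{constrained optim problem A}, using the strict concavity condition $\nabla^{2}\Phi^{\boldsymbol{\pi},\boldsymbol{w}}\preceq 0$ (assumed strict on the relevant neighbourhood) to invert the Hessian. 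Once this regularity is in hand, the remainder of the argument is a straightforward invocation of Taylor's theorem and the extreme value theorem on the compact set $\boldsymbol{W}\times\boldsymbol{\Pi}$.
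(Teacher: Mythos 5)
The paper states this theorem without any accompanying proof (neither a sketch in the main text nor an argument in the appendix), so there is no reference proof to match your proposal against; it has to stand on its own, and as written it has a genuine gap at its central step. You claim that because $\boldsymbol{w}^{\epsilon}(n)$ comes from an $n$-order truncation, ``the first $n$ partial derivatives of the composed objective at $\boldsymbol{w}^{\star}$ and at $\boldsymbol{w}^{\epsilon}(n)$ agree through the order of the truncation,'' and that therefore $\mathcal{L}$ equals the Lagrange remainder. This conflates two different expansions. The truncation in the theorem is of the \emph{modifier function} $\Theta$, viewed as a power series in its state--action arguments (cf.\ the order-$5$ series $\Theta(f_e)$ in Experiment 1): it restricts the class of representable incentive functions, and $\boldsymbol{w}^{\epsilon}(n)$ is the best parameter within that restricted class. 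It is not a Taylor truncation of $\boldsymbol{w}^{\star}$ in parameter space, and it gives you no reason why derivatives of $\boldsymbol{w}\mapsto J(\boldsymbol{w},\boldsymbol{\pi}(\boldsymbol{w}))$ at the two distinct points $\boldsymbol{w}^{\star}$ and $\boldsymbol{w}^{\epsilon}(n)$ should coincide up to order $n$ (indeed the two parameterisations need not even have the same dimension). Without that, the Taylor terms of orders $1,\dots,N$ in your expansion do not vanish --- expanding about the interior maximiser $\boldsymbol{w}^{\star}$ would kill the first-order term via $\nabla_{\boldsymbol{w}}J=0$, but nothing kills orders $2,\dots,N$ --- so the identity $\mathcal{L}=\tfrac{1}{(N+1)!}D^{N+1}J(\boldsymbol{\xi},\boldsymbol{\pi}(\boldsymbol{\xi}))$ does not follow. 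That displayed identity is also not the Lagrange remainder even on its own terms, since it omits the displacement factor proportional to $\|\boldsymbol{w}^{\star}-\boldsymbol{w}^{\epsilon}(n)\|^{N+1}$; the theorem's stated bound likewise omits it, but you cannot recover the stated inequality from Taylor's theorem without either that factor being at most one or an explicit normalisation, neither of which you establish.

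A correct route along the spirit of your plan would have to (i) expand the \emph{optimal modifier} $\Theta(\cdot,\boldsymbol{w}^{\star})$ in its series representation and bound the function-space truncation error by a Taylor/Lagrange remainder in the state--action variables, and then (ii) transfer that error to $J$ using the Lipschitz continuity of the value functions and of $J$ in the reward perturbation (the estimates in Propositions \ref{prop.S.2.} and \ref{essentiality proposition}), finally using that $J(\boldsymbol{w}^{\epsilon}(n),\cdot)\ge J(\tilde{\boldsymbol{w}},\cdot)$ for the particular truncated approximant $\tilde{\boldsymbol{w}}$ of $\boldsymbol{w}^{\star}$, since $\boldsymbol{w}^{\epsilon}(n)$ is optimal within the truncated class. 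Your smoothness discussion is the right instinct but is also not free: the paper only provides Lipschitz/uniform continuity (hence differentiability a.e.\ by Rademacher), and your proposed implicit-function-theorem argument for $C^{N+1}$ regularity of $\boldsymbol{w}\mapsto\boldsymbol{\pi}(\boldsymbol{w})$ needs a strict, invertible second-order condition on $\Phi$ that the paper never assumes ($\nabla^{2}\Phi\preceq 0$ is not enough), as well as higher-order smoothness of $R_{\rm ID}$, $h$ and $\Theta$ in $\boldsymbol{w}$, which would have to be added as hypotheses.
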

\noindent The solution $\boldsymbol{w}^\star$ is closely approximated by a truncated series expansion (other expansions e.g. neural networks are possible) reducing the number of parameters to be computed.
\section{Preserving the Nash Equilibria.}
\label{sec:preserving-M-NE}
%
%
 We can modify the framework to tackle the case in which the ID modifies the rewards to maximise some efficiency criterion subject to the condition that M-NE set of the game is preserved. Inducing convergence to the highest welfare equilibria within a fixed M-NE set is known as \emph{equilibrium selection} (ES) and represents a major challenge in GT and MARL \cite{harsanyi1988general}.The ID framework can be used to address ES within the context of MPGs. 
 
 Let $\mu_k(\boldsymbol{W})\triangleq \{\boldsymbol{w}\in \boldsymbol{W}: \Psi(\cdot,\boldsymbol{w})= k| k\in\mathbb{R}\}$. Since $\mathcal{G}(\mu_k(\boldsymbol{W}))$ is just the MG in which the agents' rewards are modified by at most a constant, it is straightforward to deduce that the NE set is preserved.  A particular case of this is potential-based reward shaping in which each agent's value function is given by the following:
\begin{align*}
    v_{i}^{\boldpi,\boldsymbol{w}}(s)
=
    \mathbb{E}
    &\Big[
        \sum_{t=0}^T
            \gamma^{t-1}_{i}
            \big\{\gamma R_{i}(s_t,u_{i,t},u_{-i,t})+F(s_t,u_{i,t},u_{-i,t},\boldsymbol{w})\big\}
    \Big]
\end{align*}
where $F(s_t,u_{i,t},u_{-i,t},\boldsymbol{w}):=\gamma_i\Theta(s_t,u_{i,t},u_{-i,t},\boldsymbol{w})\\-\Theta(s_{t-1},u_{i,t-1},u_{-i,t-1},\boldsymbol{w})$. When $T=\infty$, since $0\leq\gamma_i<1$ the potential-based modifier produces the telescoping sum:\\ $\sum_{t\geq 0}\Delta\Theta(s_t,u_{i,t},u_{-i,t},\boldsymbol{w})=\Theta(s_0,u_{i,0},u_{-i,0},\boldsymbol{w})\equiv c$ where $c$ is some constant independent of the agents' policies. We therefore see that $NE\{\mathcal{G}(\boldsymbol{w})\}$ is preserved since from Definition 2, we can see that the addition of constants to the agents' reward functions preserves the M-NE condition. The general case does not restrict to potential based reward shaping. Moreover, unlike current potential-based shaping methods for which the function $F$ is fixed and may \textit{lead to convergence to less desirable equilibria} \cite{devlin2011}, now the function $F(\cdot,\boldsymbol{w})$ is determined as a solution to the ID's problem in $\boldsymbol{w}$.
  
By similar reasoning as the method of the previous section, we deduce that the following:
\begin{IEEEeqnarray}{rCl}
	\begin{aligned}
		\underset{ 
		    \boldsymbol{w} \in \boldsymbol{W}, 
		    \boldsymbol{\pi} \in \boldsymbol{\Pi} }{\rm maximise} &
			\hspace{1 mm} 	
				J(\boldsymbol{w},\boldsymbol{\pi})
\;{\rm s.t.}\; 	\nabla_{\pi} \Phi^{\boldsymbol{\pi}, \mu_0(\boldsymbol{W})} = 0
                ,\;\;  \nabla^2_{\pi} \Phi^{\boldsymbol{\pi}, \mu_0(\boldsymbol{W})} \preceq 0
.
	\end{aligned}
\qquad
\label{eq:MA-problem-B}
\end{IEEEeqnarray}
In this case, the M-NE constraint is defined over the M-NE set before the ID alters the game.
The formulation of the problem ensures that the agents' rewards are modified in a way that the agents play efficient policies, the constraint ensures that the policy remains within the original NE set. In order to formally describe the notion of efficiency, we need the following concepts:
\begin{definition}
The strategy profile $\boldsymbol{\pi}\in\boldsymbol{\Pi}$ is said to be a \textbf{welfare optimal strategy profile} of $\mathcal{G}(\boldsymbol{w})$ if:
$\sum_{i \in \mathcal{N}}v^{\pi^i,\pi^{-i},\boldsymbol{w}}_{i}\geq \sum_{i \in \mathcal{N}}v^{\pi'^{i},\pi^{-i},\boldsymbol{w}}_{i}.\label{welfare optim}$
\end{definition}

\begin{definition}
For a given $\boldsymbol{w}\in\boldsymbol{W}$, $\boldsymbol{\pi}\in\boldsymbol{\Pi}$ is a said to be a \textbf{Pareto efficient (PE)} strategy profile of  $\mathcal{G}(\boldsymbol{w})$ if:
$
i)\;	v^{\pi^i,\pi^{-i},\boldsymbol{w}}_{i}
\geq 
    v^{\pi'^{i},\pi'^{-i},\boldsymbol{w}}_{i}	
, 	
    \forall i \in \mathcal{N}, \quad ii)\;
    v^{\pi^i,\pi^{-i},\boldsymbol{w}}_{i}
>
    v^{\pi'^i,\pi^{-i},\boldsymbol{w}}_{i}	
\; 
    \text{for an } i \in \mathcal{N}.
\label{pareto-efficiency}
$
\end{definition}
\noindent PE implies that no agent increases their reward whenever some other strategy profile $\boldsymbol{\pi'}\in\boldsymbol{\Pi}$ is played and, at least one agent is strictly best off under $\boldsymbol{\pi}$ so that all agents prefer the PE outcome. PE is a criterion for a welfare maximising ID. We say that strategy profile $\boldsymbol{\pi}$ is \textbf{payoff dominant} if $\boldsymbol{\pi}\in NE\{\mathcal{G}(\boldsymbol{w})\}$ and $\boldsymbol{\pi}$ is PE.

\begin{proposition}
Let $\boldsymbol{w}\in\boldsymbol{W}$ be a solution to ID's ES problem, 
then $\exists$ $\boldsymbol{\pi}\in NE\{\mathcal{G}_0\}$ which is a payoff dominant policy profile of $\mathcal{G}_0$.   
\end{proposition}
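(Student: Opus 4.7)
The plan is to exploit the fact that restricting $\boldsymbol{w}$ to $\mu_0(\boldsymbol{W})$ preserves the Nash equilibrium set, so the ES problem reduces to selecting the welfare-best M-NE of $\mathcal{G}_0$. First I would formalise this NE-preservation explicitly: for any $\boldsymbol{w}\in\mu_0(\boldsymbol{W})$ the cumulative modifier $\Psi(\boldsymbol{w},\boldsymbol{\pi})$ collapses to a policy-independent constant (this is exactly the telescoping computation carried out just before the proposition), so each agent's value function in $\mathcal{G}(\boldsymbol{w})$ differs from its value in $\mathcal{G}_0$ only by a constant, which cancels in the M-NE inequality of Definition 2. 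Hence $NE\{\mathcal{G}(\boldsymbol{w})\}=NE\{\mathcal{G}_0\}$, and any $\boldsymbol{\pi}$ feasible for problem \eqref{eq:MA-problem-B} already lies in $NE\{\mathcal{G}_0\}$.

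Next I would specialise to the welfare-targeted ID with $R_{\rm ID}=\sum_{i\in\mathcal{N}}v_i^{\boldsymbol{\pi},\boldsymbol{w}}$. Because $\Psi=0$ on $\mu_0(\boldsymbol{W})$ and each $v_i^{\boldsymbol{\pi},\boldsymbol{w}}$ equals $v_i^{\boldsymbol{\pi}}$ plus a constant, the objective $J(\boldsymbol{w},\boldsymbol{\pi})$ coincides with $\sum_{i\in\mathcal{N}}v_i^{\boldsymbol{\pi}}$ up to an additive constant on the feasible region. Existence of a maximiser $(\boldsymbol{w}^\star,\boldsymbol{\pi}^\star)$ then follows from the compactness-and-continuity argument already used in Theorem \ref{first existence theorem}, and any such maximiser must have $\boldsymbol{\pi}^\star$ attaining $\max_{\boldsymbol{\pi}\in NE\{\mathcal{G}_0\}}\sum_i v_i^{\boldsymbol{\pi}}$.

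To conclude payoff dominance I would argue by contradiction. Suppose there is a distinct $\boldsymbol{\pi}'\in NE\{\mathcal{G}_0\}$ that Pareto dominates $\boldsymbol{\pi}^\star$, i.e.\ $v_i^{\boldsymbol{\pi}'}\ge v_i^{\boldsymbol{\pi}^\star}$ for every $i\in\mathcal{N}$ with strict inequality for at least one agent. Summing over $i$ gives $\sum_i v_i^{\boldsymbol{\pi}'}>\sum_i v_i^{\boldsymbol{\pi}^\star}$, directly contradicting the welfare-maximality of $\boldsymbol{\pi}^\star$ established in the previous step. Hence $\boldsymbol{\pi}^\star\in NE\{\mathcal{G}_0\}$ is Pareto-undominated among equilibria, which is exactly the payoff-dominance condition, yielding the required existence.

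The main obstacle I anticipate is the interpretation of Pareto efficiency in the paper's definition: the argument above compares $\boldsymbol{\pi}^\star$ only against other Nash equilibria, in line with the Harsanyi--Selten reading of ``payoff dominance''. If PE were read globally over all of $\boldsymbol{\Pi}$, a Pareto-improving non-equilibrium profile could exist (as in a potential-game prisoners' dilemma) and the sum-of-values contradiction would fail. Handling that case cleanly would require also invoking the potential-game structure of Lemma \ref{potential preservation} and Corollary \ref{M-NE is optimal control problem} to show that any Pareto improvement can be migrated to a local maximiser of $\Phi$, which is itself an M-NE, before reapplying the welfare contradiction.
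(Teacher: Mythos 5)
The paper states this proposition without any proof (the appendix only contains Lemma A.1 and the proof of Proposition S.2), so there is no authorial argument to compare yours against; judged on its own, your route is the natural one and matches the reasoning sketched in the text surrounding the proposition: on $\mu_0(\boldsymbol{W})$ the modifier telescopes to a policy-independent constant, so $NE\{\mathcal{G}(\boldsymbol{w})\}=NE\{\mathcal{G}_0\}$ and the ES problem \eqref{eq:MA-problem-B} reduces to optimising $J$ over the fixed equilibrium set; with a welfare-targeted $J$ the induced $\boldsymbol{\pi}^\star$ maximises $\sum_{i}v_i^{\boldsymbol{\pi}}$ over $NE\{\mathcal{G}_0\}$, and your summation argument shows no other equilibrium Pareto-dominates it, i.e. it is payoff dominant in the Harsanyi--Selten (within-equilibrium-set) sense.

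Two caveats, one of which is a genuine gap in your fallback. First, the proposition as stated does not restrict $J$: your argument covers only the welfare-targeted ID, since for a trajectory-targeted $J$ the maximiser over the equilibrium set need not be welfare-maximal. You should either make that hypothesis explicit or prove the existence claim independently of $\boldsymbol{w}$ (non-emptiness of $NE\{\mathcal{G}_0\}$ via Corollary \ref{M-NE is optimal control problem}, plus closedness of the equilibrium set and continuity of the welfare sum to extract a welfare-best equilibrium; note also that the paper allows $N=\infty$, so the summation contradiction needs finiteness or summability of $\sum_i v_i^{\boldsymbol{\pi}}$). Second, the repair you sketch for the global reading of Pareto efficiency does not work: ascending $\Phi$ from a Pareto-improving profile moves all agents' policies simultaneously, and the potential only tracks unilateral deviation gains, so the local maximiser you reach need not inherit any dominance relation over the original equilibrium. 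Indeed, under the paper's literal PE definition (comparison against arbitrary $\boldsymbol{\pi}'\in\boldsymbol{\Pi}$) the proposition is false for prisoner's-dilemma-like MPGs, whose unique equilibrium is Pareto-dominated by a non-equilibrium profile. The statement is only salvageable under the within-$NE\{\mathcal{G}_0\}$ reading of payoff dominance that your main argument already uses, and your write-up should commit to that reading rather than offer the $\Phi$-migration fix.
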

\noindent The issue of how to compute $\boldsymbol{w^\star}$ remains; we now describe its computation using black-box optimisation and MARL.  
\section{Solution Method}
\label{sec:solution-method}
%
%
The method uses MARL to generate a model of the strategic (equilibrium) behaviour among the agents for a given value of $\boldsymbol{w}_k$ that determines the modification of the agents' rewards. The value of value of $\boldsymbol{w}_k$ is then updated. The specifics are as follows: the function $R_{\rm ID}$, its gradient, the function $h$ and each $v_i^{\boldsymbol{\pi},\boldsymbol{w}}$ are all unknown to the ID (however a suitable proxy for the intrinsic reward, $R_i$ is known), who solely observes its realised rewards for each candidate $\boldsymbol{w}$ which suggests a black-box optimisation method. The unknown payoff, $J$, is treated as a random function with some prior belief over the space of functions. After observing the value of $J(\boldsymbol{w}_k, \boldsymbol{\pi})$ for some $\boldsymbol{w}_k\in\boldsymbol{W}$, the belief is updated to form a posterior distribution which is used to construct an acquisition function (e.g., expected improvement) that indicates which parameter $\boldsymbol{w}_{k+1}$ should be evaluated next, guiding exploration over $\boldsymbol{W}$. We use MARL to solve the game $\mathcal{G}(\boldsymbol{w})$ allowing the agents (of the simulated game) to observe only their individual (modified) rewards after their joint policy $\boldsymbol{\pi}$ is played. 
The agents sample trajectories of experience tuples $(s_t, \boldsymbol{u}_t, (R_{i,\boldsymbol{w}_k}(s_t, \boldsymbol{u}_t))_{i\in\N}, s_{t+1})$, which are used to estimate the joint value function, $v_i^{\boldsymbol{\pi},\boldsymbol{w}}$. Then, they update their policies by performing stochastic gradient ascent.

The optimisation objective in \eqref{optimisation_objective} is nested; the ID chooses $\boldsymbol{w}$ of $\mathcal{G}(\boldsymbol{w})$ and the agents select a joint policy which generates a reward signal for the ID. Simultaneous updates of both the ID parameters and the agents' policies,  in general, lack converge guarantees due to non-stationarity. Therefore, in order to compute the solution iteratively, after an initial choice by the ID, we let the MARL algorithm run until convergence which fulfils the M-NE constraint for the ID's problem (c.f. Prop. \ref{Convergence_Prop}); the ID receives feedback from the outcome of the game $\mathcal{G}(\boldsymbol{w})$, then updates its choice of $\boldsymbol{w}$. This results in an \emph{inner-outer loop method}. Each step performed by the ID is computationally costly. As such, gradient-based algorithms require a substantial number of iterations to converge to a solution. We therefore use a sample-efficient optimisation algorithm, namely Bayesian optimisation which also allows scaling of the framework. BO also has strong theoretical guarantees for non-convex problems \cite{Shahriari2016Taking} and  can handle large dimensional problems \cite{durrande2001etude,wang2016bayesian}. Inner-outer loop methods are widely used in single agent problems to tune hyperparameters of learning algorithms \cite{maclaurin2015gradient}. 
\begin{algorithm}
\caption{The ID framework }
\label{alg:solution-method}
\textbf{Inputs}: Maximum number of BO evaluations $K$, and maximum number of MARL iterations $M$.
\begin{algorithmic}[1]
    \STATE Initialise ID's dataset $\mc{D}_0 = \{\}$ and reward modifier parameter $\boldsymbol{w}_0$.
    \FOR {$k=0, \ldots, K$} 
        \STATE Initialise agents' strategy profile $\boldsymbol{\pi}_0$.
        \FOR {$m=0,\ldots,M$} 
            \STATE Agents sample data from the environment following strategy profile $\boldsymbol{\pi}_m$. 
            \STATE Estimate joint value function (critic) $v_i^{\boldsymbol{\pi}_m,\boldsymbol{w}_k}$.
            \STATE Update joint policy (actor) $\boldsymbol{\pi}_{m+1}$.
        \ENDFOR
        \STATE Estimate ID's payoff function $J(\boldsymbol{w}_k,\boldsymbol{\pi}_M)$.
        \STATE Select new $\boldsymbol{w}_{k+1}$ guided by current data $\mc{D}_k$ using BO with expected improvement criterion.
        \STATE Augment dataset
            $
                \mc{D}_{k+1} = 
                    \left\{ 
                        \mc{D}_k, 
                        \left( 
                            \boldsymbol{w}_k, 
                            J(\boldsymbol{w}_k,\boldsymbol{\pi}_M)
                        \right) 
                    \right\}
            $.
    \ENDFOR
    \STATE Return $\boldsymbol{w}_T$.
   \end{algorithmic}
\end{algorithm}
\subsection{Discussion on the method}
\noindent\textit{Convergence.}
In order to ensure the algorithm converges to an optimal solution for the ID both the inner and outer loop are required to converge. Theorem \ref{first existence theorem} guarantees the existence of a solution for $\boldsymbol{w^{\star}}$. 
Convergence of the inner loop is required to obtain the equilibria of the simulated MPG. Consequently, the method is subject to conditions under which MARL methods converge. Hence, the method is subject to conditions under which MARL methods converge. MARL methods have been shown in general, to have strong convergence guarantees to M-NE solutions for MPGs  \cite{Leslie2006,Macua2018,heinrich2015fictitious}. The following proposition provides this guarantee:
\begin{proposition}[Convergence]\label{Convergence_Prop}
Algorithm \ref{alg:solution-method} converges to a stable point, moreover the set of stable points of algorithm 1 correspond to M-NE for the MPG.
\end{proposition}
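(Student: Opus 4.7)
The plan is to decouple the proof into (i) convergence of the inner MARL loop (lines 4--8) for a fixed reward modifier $\boldsymbol{w}_k$, (ii) convergence of the outer Bayesian optimisation loop (lines 2--12) over $\boldsymbol{w}$, and (iii) identification of the stable points of the composed iteration with M-NE of the induced MPG. The two continuity/existence results proved earlier, Proposition \ref{essentiality proposition} and Theorem \ref{first existence theorem}, together with Corollary \ref{M-NE is optimal control problem} are the main engines.

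First I would handle the inner loop. By Lemma \ref{potential preservation}, for any $\boldsymbol{w}_k \in \boldsymbol{W}$ the game $\mathcal{G}(\boldsymbol{w}_k)$ is an MPG with potential $\Phi^{\boldsymbol{\pi},\boldsymbol{w}_k}$. By Corollary \ref{M-NE is optimal control problem}, computing an M-NE reduces to the single-agent problem of maximising $\Phi^{\boldsymbol{\pi},\boldsymbol{w}_k}$ over $\boldsymbol{\pi} \in \boldsymbol{\Pi}$. The actor--critic updates in lines 5--7 are exactly a stochastic gradient ascent on this potential, which under the standard MARL hypotheses (sufficient exploration, square-summable step sizes, Lipschitz value estimates) is known to converge to a stationary point of $\Phi^{\boldsymbol{\pi},\boldsymbol{w}_k}$ satisfying the first- and second-order conditions \eqref{constraints1}. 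Thus as $M\to\infty$ the inner loop returns a policy $\boldsymbol{\pi}^{\star}(\boldsymbol{w}_k) \in NE\{\mathcal{G}(\boldsymbol{w}_k)\}$, invoking the cited convergence results for MPGs.

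Next I would handle the outer loop. The ID's payoff $J(\boldsymbol{w},\boldsymbol{\pi})$ is Lipschitz continuous in $\boldsymbol{w}$ by Proposition \ref{prop.S.2.}, and by Proposition \ref{essentiality proposition} the set $NE\{\mathcal{G}(\boldsymbol{w})\}$ varies essentially (continuously) in $\boldsymbol{w}$. Composing, the function $\boldsymbol{w} \mapsto J(\boldsymbol{w},\boldsymbol{\pi}^{\star}(\boldsymbol{w}))$ is continuous on the compact set $\boldsymbol{W}$, so Theorem \ref{first existence theorem} guarantees the existence of an optimiser $\boldsymbol{w}^{\star}$. The Bayesian optimisation step with expected improvement applied to a continuous function over a compact domain enjoys the standard no-regret convergence guarantees from \cite{Shahriari2016Taking}, so the sequence $\{\boldsymbol{w}_k\}$ produced by line 10 accumulates at a maximiser of $J(\cdot,\boldsymbol{\pi}^{\star}(\cdot))$; every such accumulation point is a stable point of Algorithm \ref{alg:solution-method}.

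Finally I would characterise the stable points. A pair $(\boldsymbol{w}^{\star},\boldsymbol{\pi}^{\star})$ is stable if and only if (a) the inner loop initialised at $\boldsymbol{w}^{\star}$ does not move $\boldsymbol{\pi}^{\star}$, which by the argument above forces $\boldsymbol{\pi}^{\star} \in NE\{\mathcal{G}(\boldsymbol{w}^{\star})\}$, and (b) the BO acquisition contains no improving direction at $\boldsymbol{w}^{\star}$. Condition (a) gives the M-NE claim directly, completing the proposition. The main obstacle I anticipate is the coupling between the loops: each outer iteration only produces an approximate M-NE, so strictly speaking the BO is optimising a noisy version of $J(\cdot,\boldsymbol{\pi}^{\star}(\cdot))$. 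I would address this by invoking the Lipschitz continuity from the proof of Proposition \ref{essentiality proposition} to bound the outer-loop error by the inner-loop suboptimality and letting $M\to\infty$ before taking $k\to\infty$, which is consistent with the nested inner--outer structure of the algorithm.
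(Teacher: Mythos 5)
Your proposal is correct and follows essentially the same route as the paper, which in fact offers no standalone proof of this proposition: the paper relies on the surrounding discussion, namely inner-loop convergence to an M-NE via the MPG structure and potential-maximisation reduction (Lemma \ref{potential preservation}, Corollary \ref{M-NE is optimal control problem}) under the cited MARL convergence guarantees for MPGs, together with outer-loop Bayesian-optimisation convergence over the compact set $\boldsymbol{W}$ backed by the continuity and existence results (Propositions \ref{prop.S.2.} and \ref{essentiality proposition}, Theorem \ref{first existence theorem}). Your explicit handling of the inner--outer coupling (bounding the outer-loop error by the inner-loop suboptimality and letting $M\to\infty$ before $k\to\infty$) is a detail the paper leaves implicit but is consistent with its nested-loop argument.
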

%
Another consideration is the growth in decision complexity of the ID's problem with the number of parameters over which the BO is performed. This depends on the size of the state space of the MAS model. Theorem 3, however proves that approximate solutions are computable with fewer parameters for a given error bound.
\section{Experiments}\label{sec:experiments}
\subsection{Experiment 1: Optimising a Traffic Network}
The following experiment illustrates the application of the method to a traffic network problem. 
We consider road traffic network examples, one of which is a subsection of the city of London. In this setting, each agent seeks to traverse the graph from a source node (labelled 1) to a goal node (labelled 8) - this, for example can represent agents performing a commute. The agents incur costs which represent the travel time. When traversing an edge, each agent incurs a unit cost plus an additional cost which is a convex function of the number of agents traversing the edge at that time - the latter cost represents additional time delays due to traffic congestion. 

The goal of each agent is to minimise its own costs. It is well-known that in such systems (e.g. Braess' Paradox, Pigou example), the agents' selfish behaviour of leads to congestion on `more desirable' paths leading poor system efficiency \cite{roughgarden2005selfish}. 

The problem is modelled as a \emph{selfish routing game} (SRG) - a widely studied potential game~\cite{roughgarden2005selfish} that models traffic networks. In this setting, agents pursuing their individual objectives produce outcomes that result in high travel times for all \cite{youn2008price}. In this problem, a set of $N$ self-interested agents direct its \textit{commodity flow} through a network $G=(V,E)$ where $V$ is the set of nodes and $E\subseteq V\times V$ is the set of edges of $G$. 
Each agent seeks to direct a single commodity  e.g., a taxi firm directing only its fleet. When traversing an edge their commodity produces congestion incurring a negative externality (cost) on all agents. Each agent's commodity is infinitely divisible so that at each node the agents may split their commodity flow over each outgoing edge. Each agent's goal is to direct its commodity through paths that minimise its own costs.  

A central planner (CP) seeks to minimise delays due to congestion by devising a dynamic system of toll charges that induces an even commodity flow over a given subset of edges of the network $\hat{E}\subseteq E$ at all times. The CP's problem is to maximise 
$R_{\rm ID}(\boldsymbol{w}) = -  \sum_{t=1}^T[\sum_{l \in \hat{E}} (f^{\star}(t) -f_{l}(t)) ^ 2]^{1/2}
$ where $f_l(t,\boldsymbol{w})$ is the flow on edge $l\in E$ at time $t$ and $f^{\star}(t)\triangleq (|\hat{E}|)^{-1}\sum_{l\in \hat{E}}f_l(t)$. To induce changes in the agents' commodity flows, the CP adds to $R_{i,e}$ the function $\Theta(f_e)$ which is a power series of order 5.

We consider two cases, we firstly provide an intuitive example known as  \emph{Braess' example}, a widely studied problem that clearly demonstrates the inefficiencies of traffic networks \cite{roughgarden2005selfish}. We then apply the method to a subsection of the traffic network in the city of London, UK. We show that our framework finds an optimal system of tolls that leads to maximal system efficiency. 
%
\begin{figure}[tb!]
\centerline{\includegraphics[width=0.45\columnwidth]{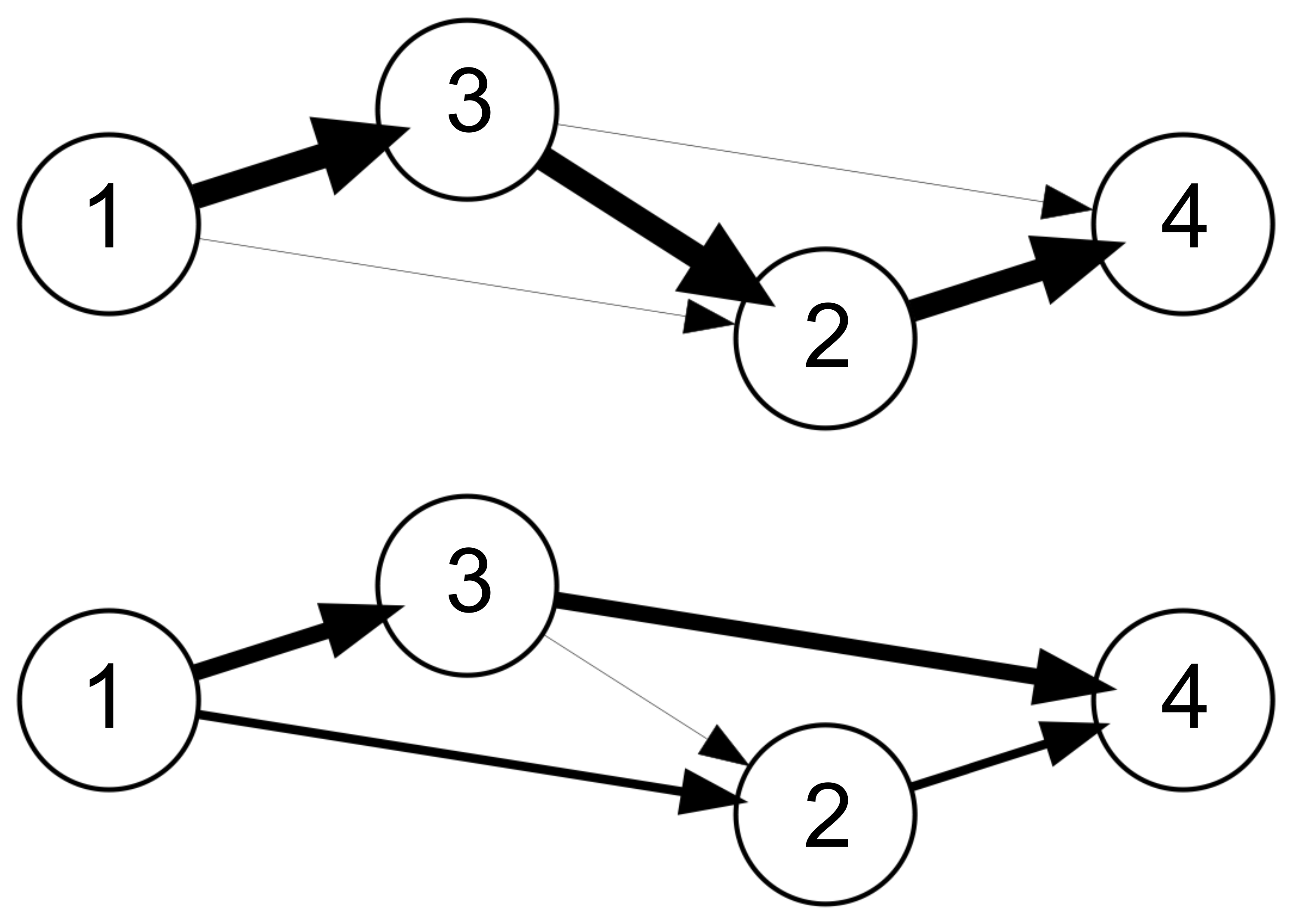}}
\caption{(Top) Braess' example --- all agents direct their commodity flow through the middle edge (3$\rightarrow$ {2}). (Bottom) The (distributed) commodity flows with the ID's toll added.\vspace{-6 mm}}
\label{fig:pigou}
\end{figure}
\subsubsection{Braess' Example}
Fig. 1 shows a diagrammatic illustration of the Nash equilibrium agent flow (the size of the flow of agents through an edge is represented by the edge width) through the network after convergence without ID. As is shown in Fig. \ref{fig:pigou}a), selfish agents play an M-NE strategy in which they route all their commodity through the middle edge (3$\rightarrow$ {2}) leading to high congestion costs. As is shown in Figs. \ref{fig:pigou}a) and \ref{fig:pigou}b), when an ID is included, it learns how to set tolls (costs) on the middle edge that induce equal flow over the graph which maximises social welfare.
\begin{figure}[tb!]
\hspace{1mm}
\begin{minipage}[b]{.47\columnwidth}
  \centerline{\includegraphics[width=0.8\columnwidth]{AAMAS2019/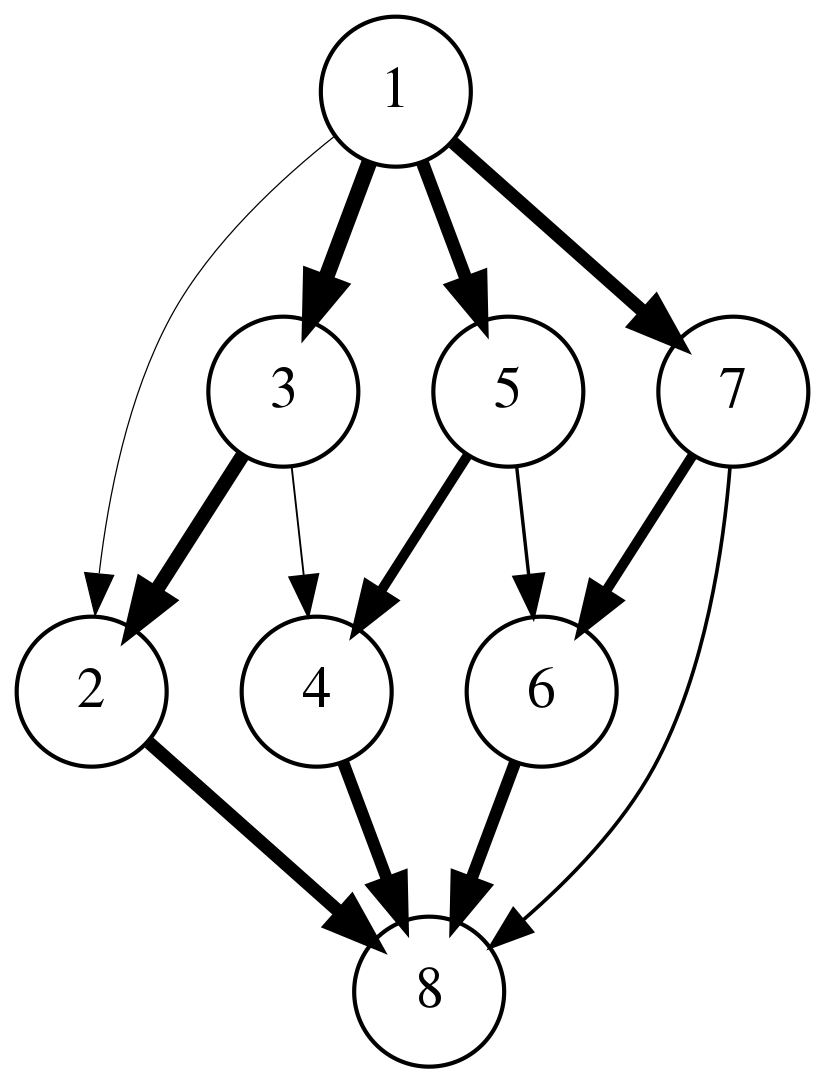}}
  \centerline{a)}
\end{minipage}
\begin{minipage}[b]{.47\columnwidth}
  \centerline{\includegraphics[width=0.8\columnwidth]{AAMAS2019/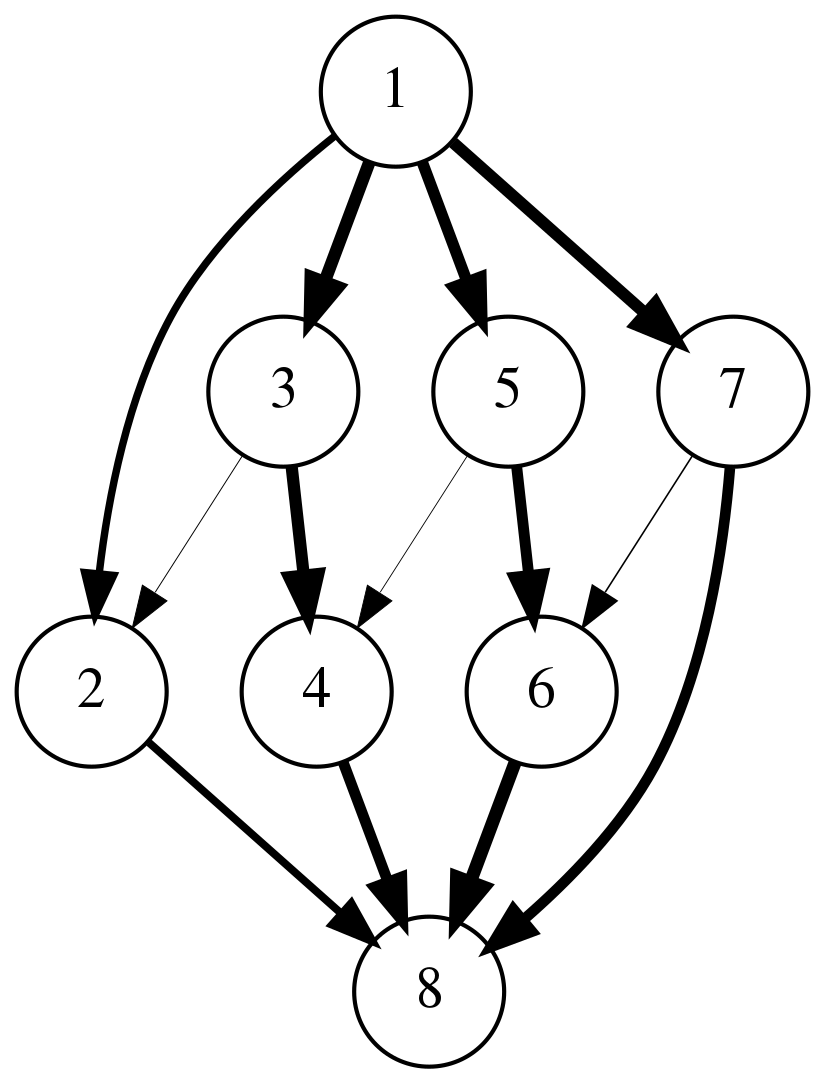}} \centerline{b)}
\end{minipage}
\begin{center}
    \input{AAMAS2019/selfish_routing_game.tikz}
\end{center}
\vspace{-2 mm}
\caption{a) Network flow without ID. 
b) Network flow with ID. Width of the edges represent the size of the flow produced by the agents after converging to a Nash equilibrium with their rewards modified by the ID. 
c) Comparison of social welfare with iterations of agents' MARL algorithm (inner loop of Algorithm \ref{alg:solution-method}) without ID (red curve) and with ID (blue curve) after $K=150$ iterations of Bayesian optimisation. Without incentive, the agents converge to an equilibrium that is mindless of the social welfare, while the inclusion of the incentives leads to a significant increase in social welfare.\vspace{-4 mm}}
    \label{Traffic_2}
\end{figure}

\subsubsection{Extended City Case}

We test our method in a complex network consisting of 8 nodes and 13 edges which represents a subsection of the London road network. We show that our method produces socially optimal (M-NE) outcomes.  The ID is able to isolate the 3 roads edges to apply tolls in only 150 outer loop iterations. 

Our method shows that the ID was able to isolate three nodes to apply a toll which led to a reduction in congestion (as indicated in Fig. 2) through the network in only 150 iterations of BO (outer loop). Fig. 2 c) shows the social welfare function (which is the sum of all agents' returns) after 6,000 iterations of the MARL algorithm (inner loop) without the ID (orange curve) and with the ID (blue curve), and demonstrates a significant increase in social welfare. This technique is a first example of reinforcement learning in an SRG that handles large networks and populations of users. This is in contrast to current methods in which agents choose \emph{paths} resulting in exponential scaling in decision complexity with graph size \cite{moscardelli2013convergence}. 

\subsection{Experiment 2: Supply \& demand matching with thousands of agents}



 
Consider 2,000 agents each seeking to locate themselves at desirable points in space over some time horizon. The desirability of a region changes with time and decreases with the number of agents located within their neighbourhood. The resulting NE distribution is in general, highly inefficient (and may not conform to external objectives) due to agent clustering \cite{Mguni2018}. The problem is a dynamic generalisation of \emph{the El Faro bar problem} 
and encapsulates \emph{spectrum sharing problems in wireless communications} \cite{ahmad2010spectrum}. The problem also models spatio-economics problems such as firms locating their supply with dynamic demand e.g. freelance taxis. To handle large strategic populations, we use an mean field game framework \cite{Mguni2018}.

\newlength{\lenaWidth} \setlength{\lenaWidth}{.14\textwidth}
\newlength{\lenaSkip} \setlength{\lenaSkip}{0.14em}
\begin{figure}[tb!]
\begin{minipage}[b]{.15\textwidth}
  \centerline{\includegraphics[width=.9\lenaWidth]{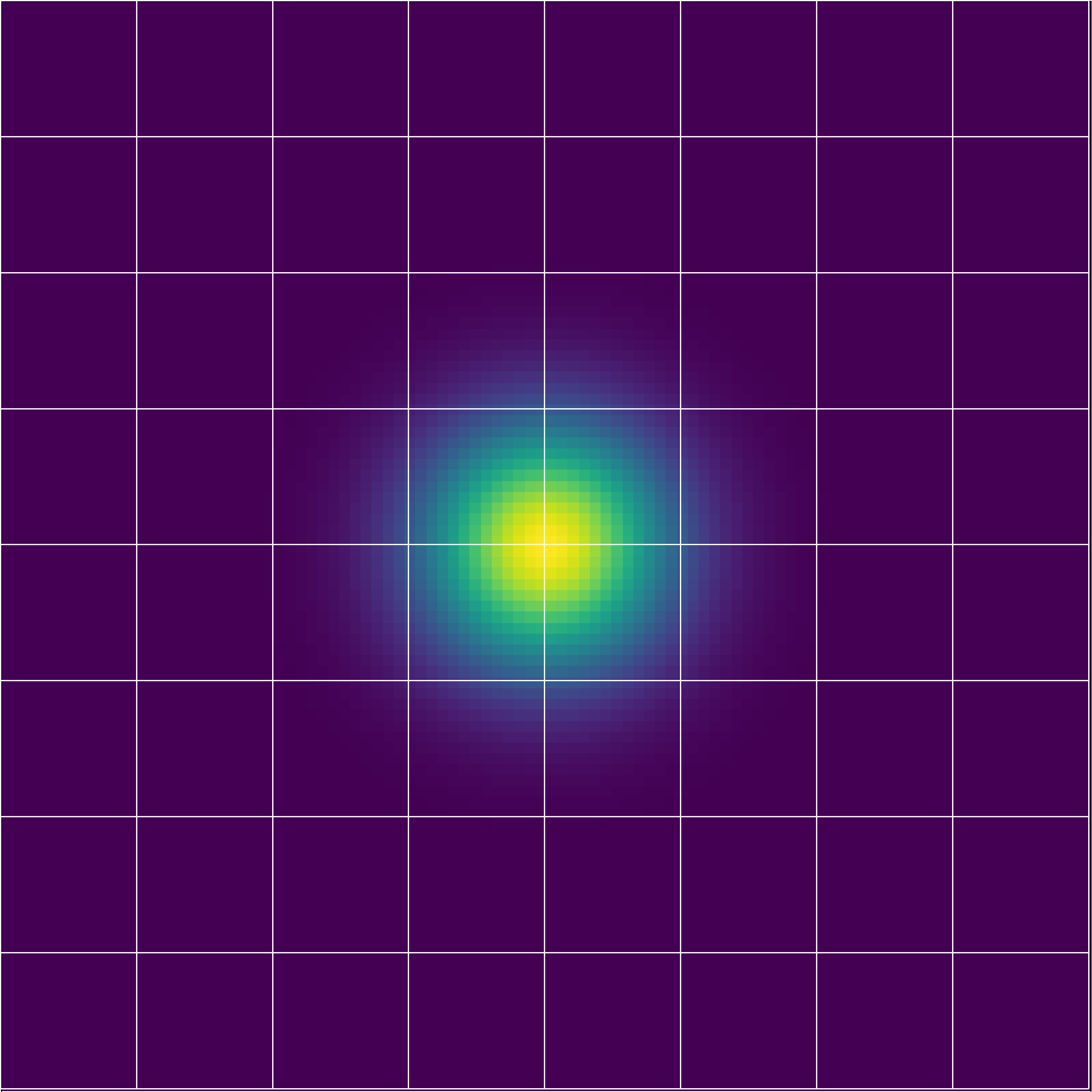}}
  \centerline{Desired}
\end{minipage}
\begin{minipage}[b]{.15\textwidth}
  \centerline{\includegraphics[width=.9\lenaWidth]{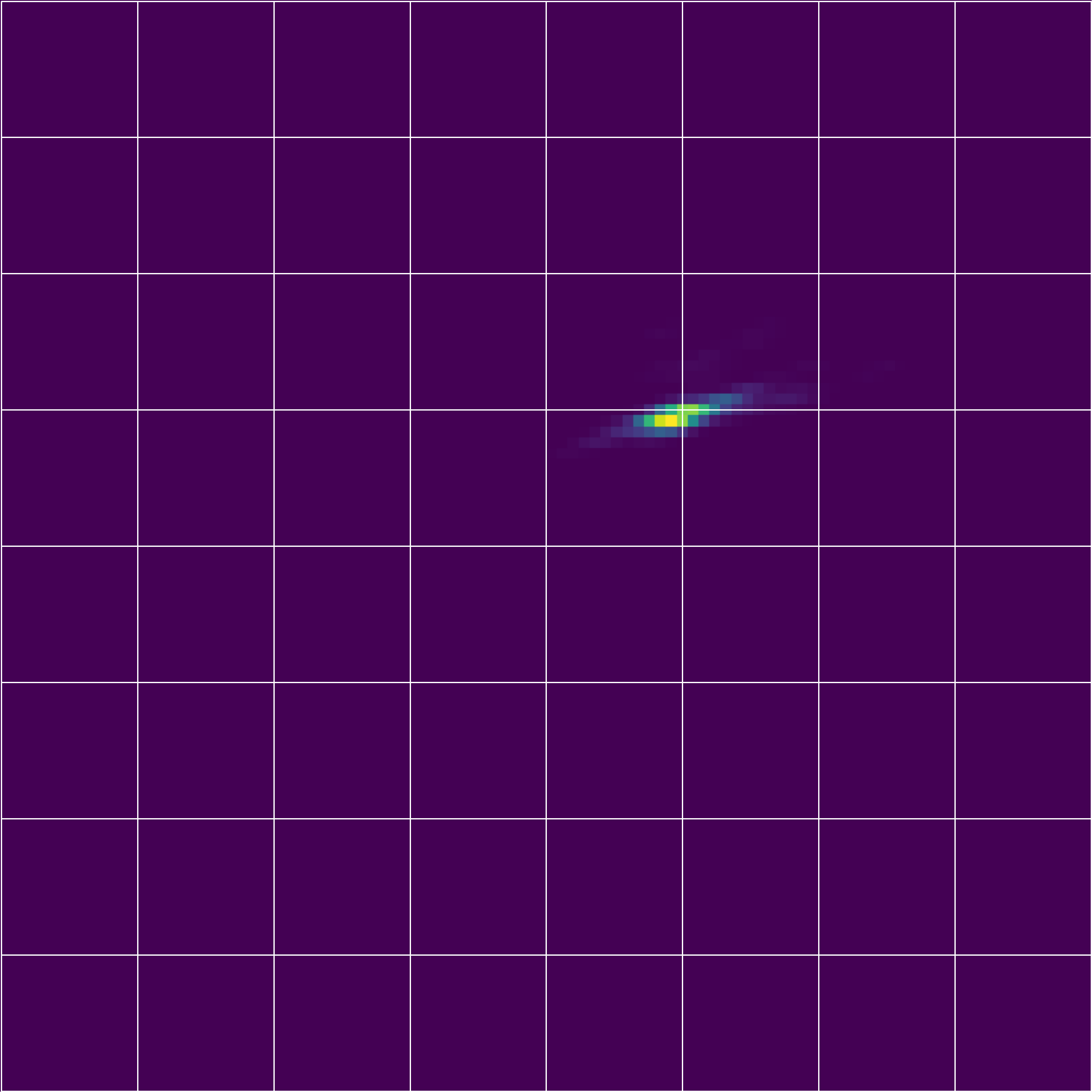}}
  \centerline{Default}
\end{minipage}
\centering
\begin{minipage}[b]{.15\textwidth}
  \centerline{\includegraphics[width=0.9\lenaWidth]{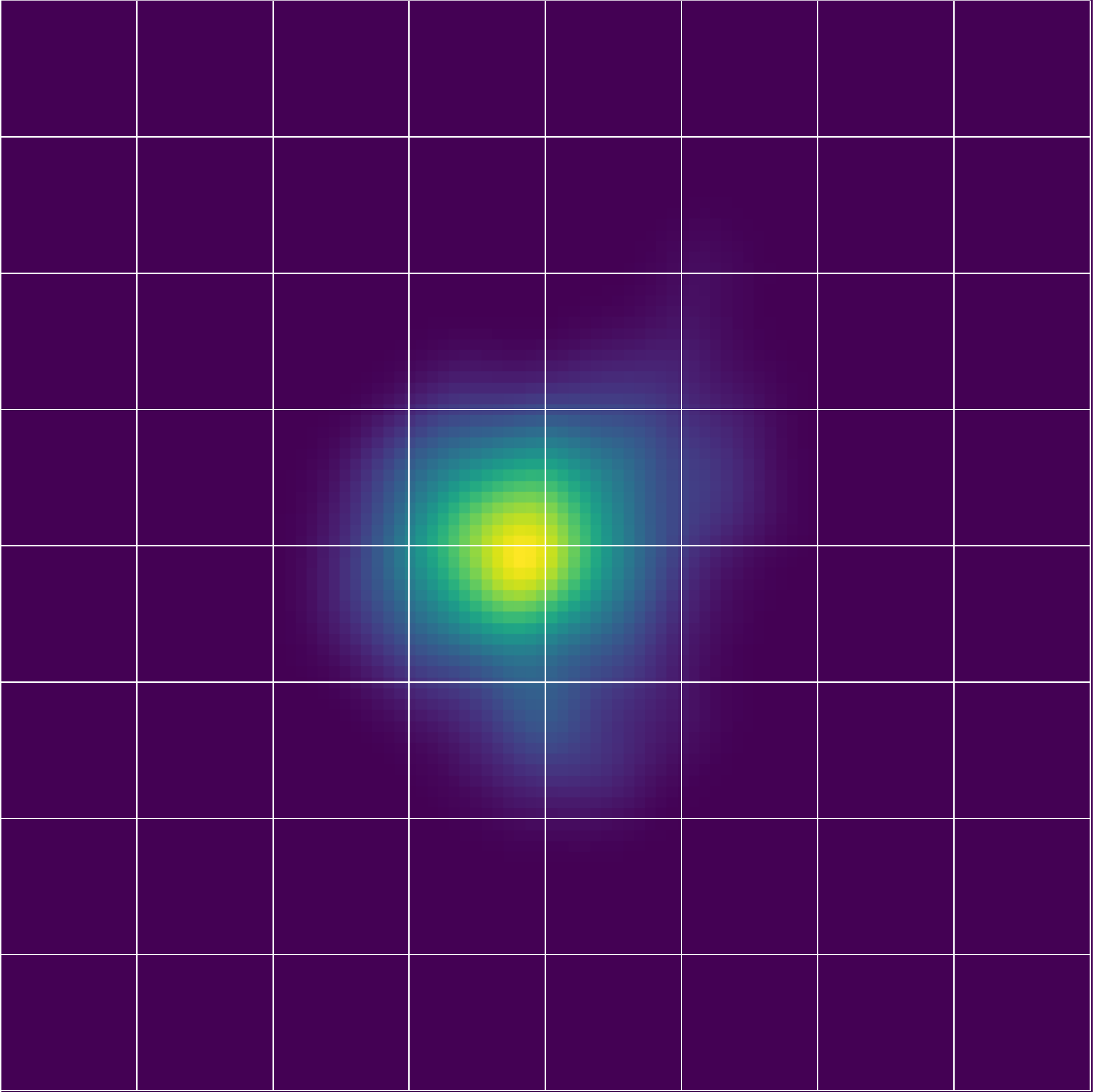}}
  \centerline{Induced}
\end{minipage}
\begin{minipage}[t]{.10\textwidth}
\vspace{.5 mm}  
\centerline{
\begin{tikzpicture}

\definecolor{color0}{rgb}{0.886274509803922,0.290196078431373,0.2}

\begin{axis}[
xlabel={\# Evaluations},
ylabel={KL Div.},
width=4.75cm,
height=1.5cm,
scale only axis,
xmin=0,
xmax=200,
ymin=0,
ymax=5,
xmajorgrids,
ymajorgrids,
axis background/.style={fill=white},
legend style={at={(0.97,0.03)}, anchor=south east, legend cell align=left, align=left, draw=white!15!black},
colormap/jet,
]
\addplot [thick, blue, line width=1pt]
table {%
0 4.16119309522577
1 4.05247449070942
2 4.05247449070942
3 2.23062451987471
4 1.47196065602349
5 1.33283229040457
6 1.33283229040457
7 1.33283229040457
8 1.33283229040457
9 1.30478180253656
10 1.30478180253656
11 1.30478180253656
12 1.30478180253656
13 1.20822491577706
14 0.209167888288893
15 0.209167888288893
16 0.209167888288893
17 0.209167888288893
18 0.209167888288893
19 0.209167888288893
20 0.209167888288893
21 0.209167888288893
22 0.209167888288893
23 0.209167888288893
24 0.209167888288893
25 0.209167888288893
26 0.209167888288893
27 0.209167888288893
28 0.209167888288893
29 0.209167888288893
30 0.209167888288893
31 0.209167888288893
32 0.209167888288893
33 0.209167888288893
34 0.209167888288893
35 0.209167888288893
36 0.209167888288893
37 0.209167888288893
38 0.209167888288893
39 0.209167888288893
40 0.209167888288893
41 0.209167888288893
42 0.209167888288893
43 0.209167888288893
44 0.209167888288893
45 0.209167888288893
46 0.209167888288893
47 0.209167888288893
48 0.209167888288893
49 0.209167888288893
50 0.209167888288893
51 0.209167888288893
52 0.209167888288893
53 0.209167888288893
54 0.209167888288893
55 0.164767797813012
56 0.164767797813012
57 0.164767797813012
58 0.164767797813012
59 0.164767797813012
60 0.164767797813012
61 0.124325622516294
62 0.124325622516294
63 0.124325622516294
64 0.124325622516294
65 0.124325622516294
66 0.124325622516294
67 0.124325622516294
68 0.124325622516294
69 0.124325622516294
70 0.124325622516294
71 0.124325622516294
72 0.124325622516294
73 0.124325622516294
74 0.124325622516294
75 0.124325622516294
76 0.124325622516294
77 0.124325622516294
78 0.124325622516294
79 0.119958298757888
80 0.119958298757888
81 0.119958298757888
82 0.100652604276165
83 0.100652604276165
84 0.100652604276165
85 0.100652604276165
86 0.0978230940905538
87 0.0978230940905538
88 0.0978230940905538
89 0.0978230940905538
90 0.0978230940905538
91 0.0978230940905538
92 0.0978230940905538
93 0.0978230940905538
94 0.0978230940905538
95 0.0978230940905538
96 0.0946136538014941
97 0.0946136538014941
98 0.0837334201698922
99 0.0837334201698922
100 0.0837334201698922
101 0.0837334201698922
102 0.0691316210848706
103 0.0691316210848706
104 0.0691316210848706
105 0.0691316210848706
106 0.0691316210848706
107 0.0691316210848706
108 0.0691316210848706
109 0.0691316210848706
110 0.0691316210848706
111 0.0691316210848706
112 0.0691316210848706
113 0.0691316210848706
114 0.0691316210848706
115 0.0691316210848706
116 0.0691316210848706
117 0.0691316210848706
118 0.0691316210848706
119 0.0691316210848706
120 0.0691316210848706
121 0.0691316210848706
122 0.0691316210848706
123 0.0691316210848706
124 0.0691316210848706
125 0.0691316210848706
126 0.0691316210848706
127 0.0691316210848706
128 0.0691316210848706
129 0.0691316210848706
130 0.0691316210848706
131 0.0691316210848706
132 0.0691316210848706
133 0.0691316210848706
134 0.0691316210848706
135 0.0691316210848706
136 0.0691316210848706
137 0.0691316210848706
138 0.0691316210848706
139 0.0691316210848706
140 0.0691316210848706
141 0.0691316210848706
142 0.0691316210848706
143 0.0691316210848706
144 0.0691316210848706
145 0.0691316210848706
146 0.0639571996266709
147 0.0639571996266709
148 0.0639571996266709
149 0.0639571996266709
150 0.0639571996266709
151 0.0639571996266709
152 0.0639571996266709
153 0.0639571996266709
154 0.0639571996266709
155 0.0639571996266709
156 0.0639571996266709
157 0.0610235530801969
158 0.0610235530801969
159 0.0610235530801969
160 0.0610235530801969
161 0.0610235530801969
162 0.0610235530801969
163 0.0610235530801969
164 0.0610235530801969
165 0.0610235530801969
166 0.0610235530801969
167 0.0565308592561742
168 0.0565308592561742
169 0.0565308592561742
170 0.0565308592561742
171 0.0565308592561742
172 0.0565308592561742
173 0.0565308592561742
174 0.0565308592561742
175 0.0565308592561742
176 0.0565308592561742
177 0.0565308592561742
178 0.0565308592561742
179 0.0565308592561742
180 0.0484987702557144
181 0.0484987702557144
182 0.0484987702557144
183 0.0484987702557144
184 0.0484987702557144
185 0.0484987702557144
186 0.0484987702557144
187 0.0484987702557144
188 0.0484987702557144
189 0.0484987702557144
190 0.0484987702557144
191 0.0484987702557144
192 0.0484987702557144
193 0.0327418153890369
194 0.0327418153890369
195 0.0327418153890369
196 0.0327418153890369
197 0.026222378560047
198 0.026222378560047
199 0.026222378560047
200 0.026222378560047
};
\end{axis}

\end{tikzpicture}}
\end{minipage}
\caption{One shot case. (Top) Heat maps represent the ID's preferred distribution $M^\star$, the default agents' behaviour, and the agents' distribution with modified rewards. (Bottom) Average  KL divergences for each evaluation of the ID's BO outer loop (averaged over 100 independent tests per evaluation for 4 independent runs).\vspace{-5.5 mm}}
\label{fig:oneshot}
\end{figure}
A formal description is as follows: the game has a finite set of agents $\mathcal{N}\triangleq\{1,\ldots,N\}$, where $N\in\mathbb{N}$. At time $t<T$, the state of the system is 
$
    \boldsymbol{x}_t = ( x_{i,t} )_{i\in\N}
    \in
    \mathcal{S}
$ where $x_{i,t}$ denotes the location of agent $i$ at time $t$ and $\mathcal{S}\subseteq\mathbb{R}^2$.
Each agent $i$ selects action $u_{i,t} \in\mathbb{R}^2$ which is a vector movement towards some location $x_{i,t+1}\in\mathcal{S}$ .
 The transition dynamics are given by 
$x_{i,t+1}=\alpha x_{i,t} +\beta u_{i,t} + \epsilon_{i,t}$, 
where $\alpha$, $\beta$ are scalars,
and $\epsilon_{i,t} \sim \mathcal{N}(0,\Sigma)$, for some covariance matrix $\Sigma$. The agents' joint action produces a distribution $M^a_{t+1}$ of agents over $\mathcal{S}$. Let $m^a_{x_t}\in\mathbb{P}(\mathcal{H})$ be the density of agents at some location $x_t\in\mathcal{S}$ at time $t\in [0,T]$, where $\mathbb{P}(\mathcal{H})$ denotes the space of probability measures. Each point in $\mathcal{S}$ has some level of \emph{desirability} $\Gamma:\mathcal{S}\times\mathbb{P}(\mathcal{H}) \to \mathbb{R}$ which is determined by the agent's location and the density of agents at that point. Each agent's reward, $R_i$ is given for any $\boldsymbol{\pi}\in\boldsymbol{\Pi}$ by:
$R_i(\boldsymbol{x}_t,m^a_{\boldsymbol{x},u_{i}}) 
    =\mathbb{E}
    \big[
        \sum_{t=0}^T
 \Gamma(\boldsymbol{x}_t,m^a_{\boldsymbol{x}_t})-
            \frac{1}{2} u_{i,t}^\T K u_{i,t}
    \big]$,
where $\Gamma(\boldsymbol{x}_t,m^a_{\boldsymbol{x}_t})
:=    
    (   \boldsymbol{x}_t  - \tilde{\boldsymbol{x}}_t)^2 -    \alpha (m^a_{\boldsymbol{x}_t}
    )^2
,
$
where the expectation is taken over the state-action trajectory induced by the system dynamics and joint policy $\boldpi$.
 The term, $\Psi$, rewards the agent for locating  closer to the point $\tilde{\mathbf{x}}_t\in\mathcal{S}$ at time $t\leq T$ whilst penalising the agent for remaining in areas with a large concentrations agents. The quadratic term levies a movement penalty control cost.
 A principal aims to incentivise the self-interested agents to adopt a target distribution $M^{\star}_t$ at each time step $t\leq T$. The principal's objective $J$ is given by a KL divergence between $M^a_t$ and $M^{\star}_t$ i.e. $J(\boldsymbol{w},\boldsymbol{\pi})=\mathbb{E}[
        \sum_{t=0}^T
            {\rm KL}(M^a_t(\boldsymbol{w},\boldsymbol{\pi})\|M^{\star}_t)]   
$. To incentivise the agents to adopt its desired distribution, the principal adds a reward modifier $\Theta$ - a function parameterised by $\boldsymbol{w}\in\boldsymbol{W}$. 
We test our method both \emph{one-shot} \emph{dynamic} scenarios. 


In the \textbf{one-shot game} the ID seeks to induce an agent distribution (shown by the left heat map in Fig.~\ref{fig:oneshot}) - this differs from the distribution obtained when agents' maximise only their intrinsic reward function (central heat map in Fig.~\ref{fig:oneshot}). 
When the modifier function $\Theta$ is added to the agents' rewards, the average KL divergence converges almost to zero which demonstrating a close match of the agents' distribution (right heat-map in Fig.~\ref{fig:oneshot}) with the desired one.\footnote{The small discrepancy from $0$ is due to the the Gaussian approximation of the agent density.}

\begin{figure}[htb!]
\begin{minipage}[b]{.13\textwidth}
  \centerline{\includegraphics[width=0.9\lenaWidth]{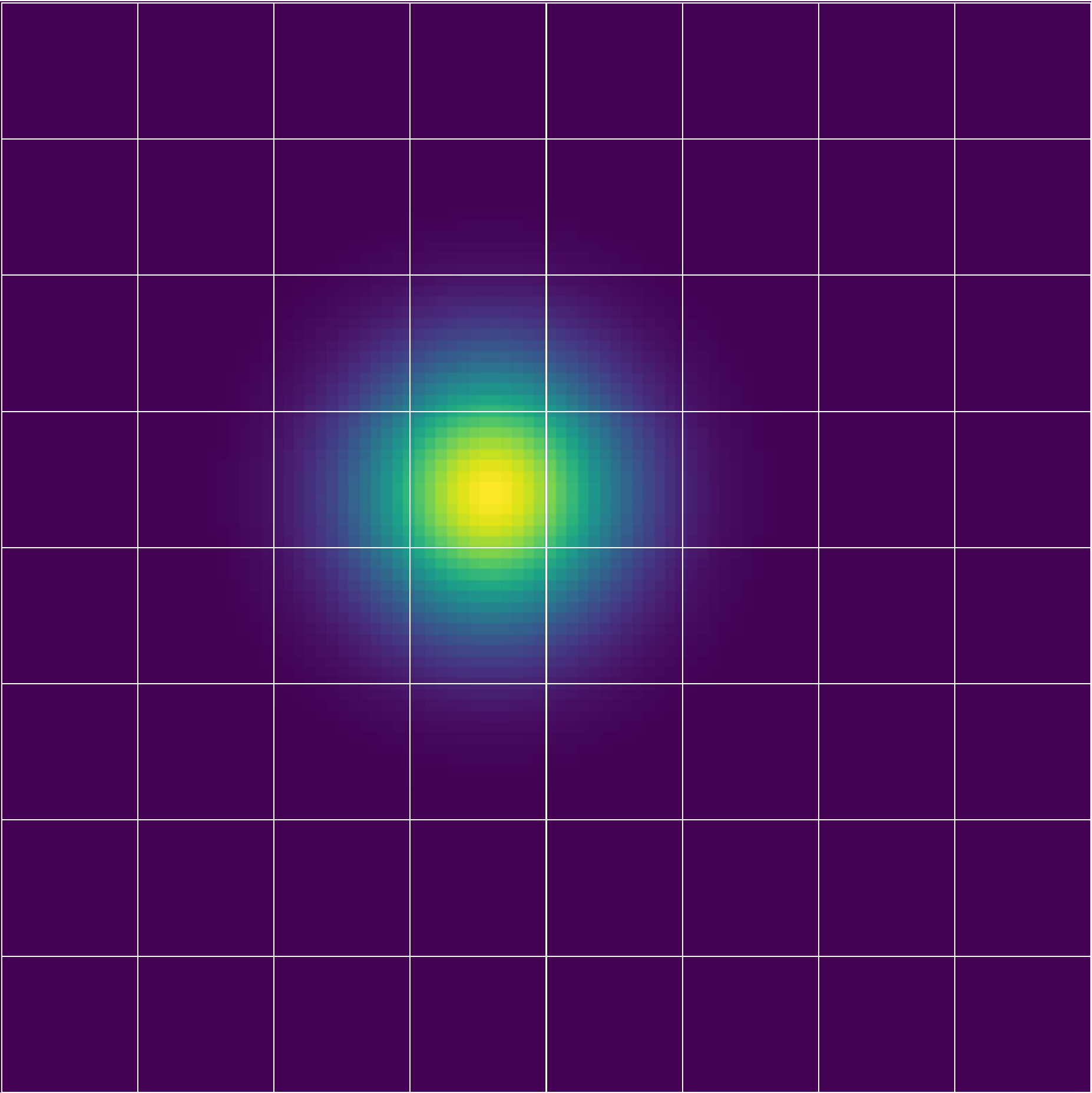}}
  \vspace{.3em}
  \centerline{\includegraphics[width=0.9\lenaWidth]{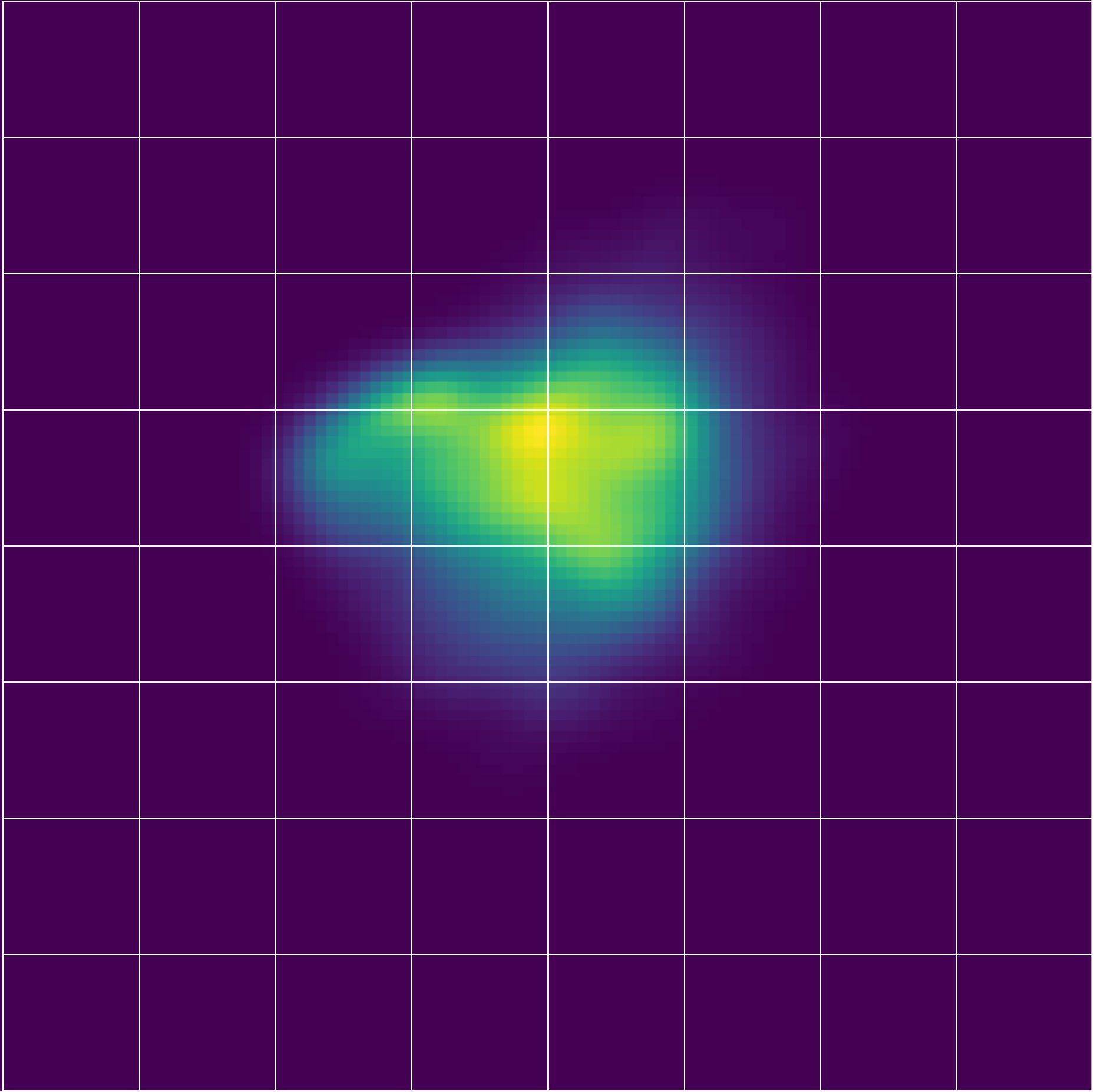}}
  \centerline{$t=0$}
\end{minipage}
\begin{minipage}[b]{.13\textwidth}
  \centerline{\includegraphics[width=0.9\lenaWidth]{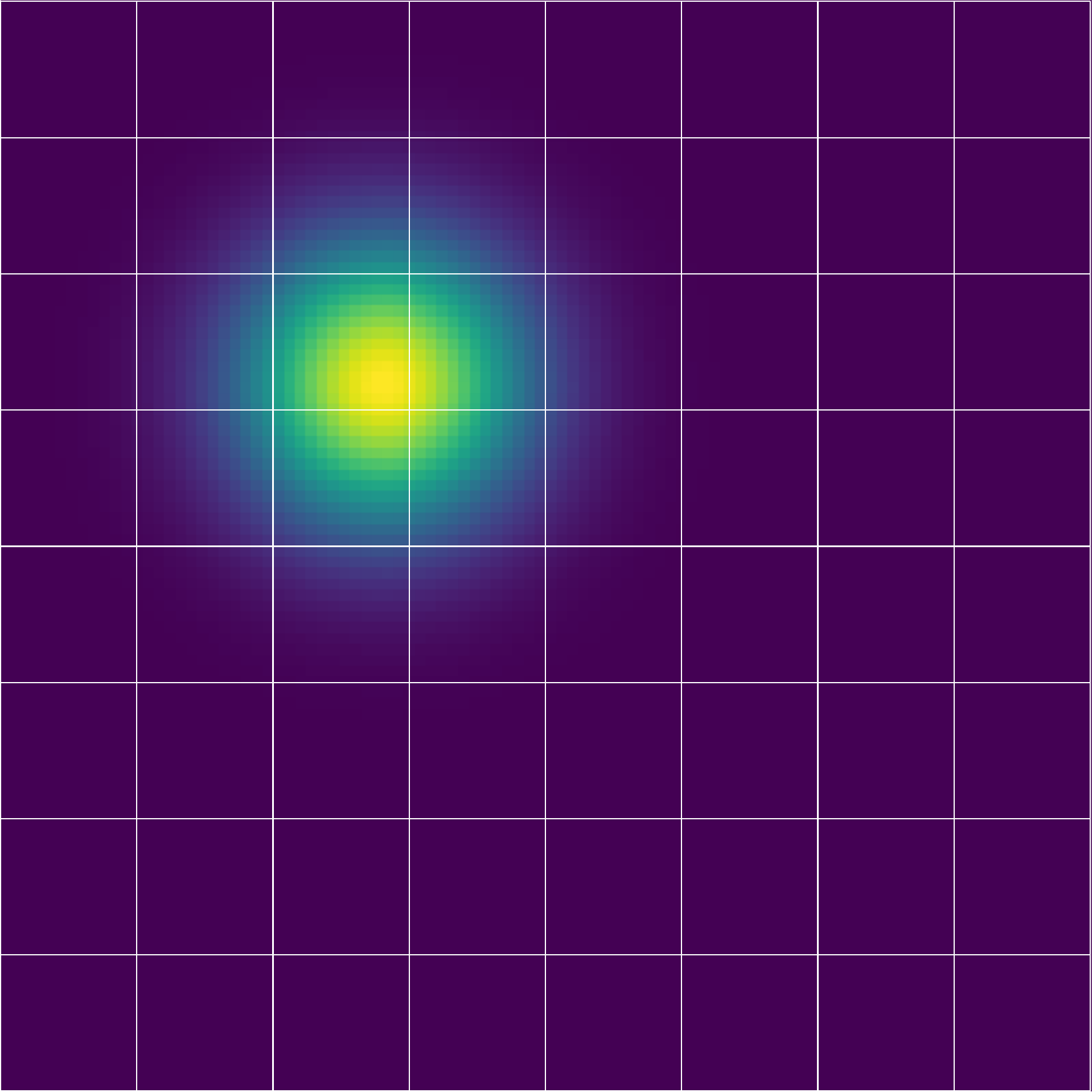}}
  \vspace{.3em}
  \centerline{\includegraphics[width=0.9\lenaWidth]{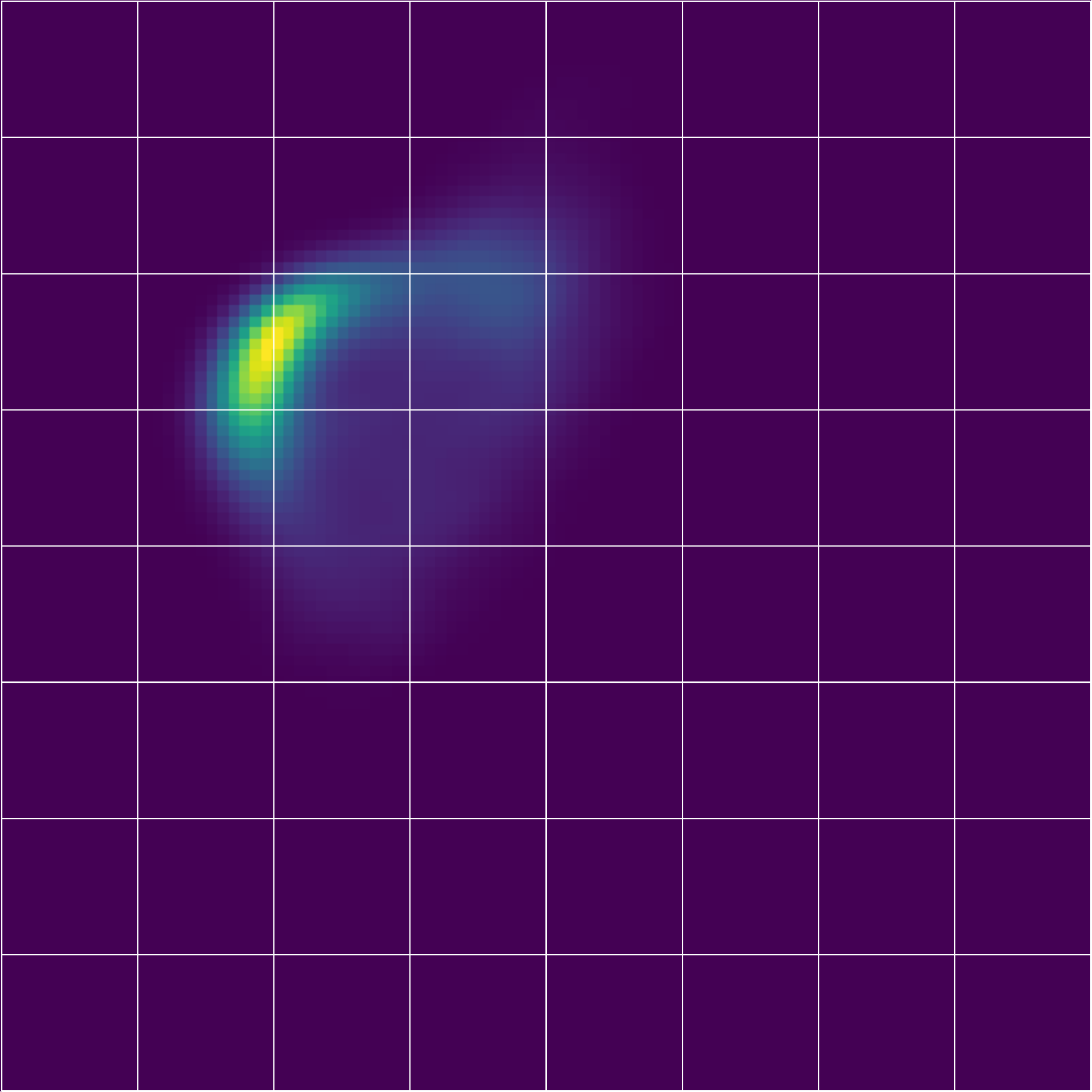}}
  \centerline{$t=1$}
\end{minipage}
\begin{minipage}[b]{.13\textwidth}
  \centerline{\includegraphics[width=0.9\lenaWidth]{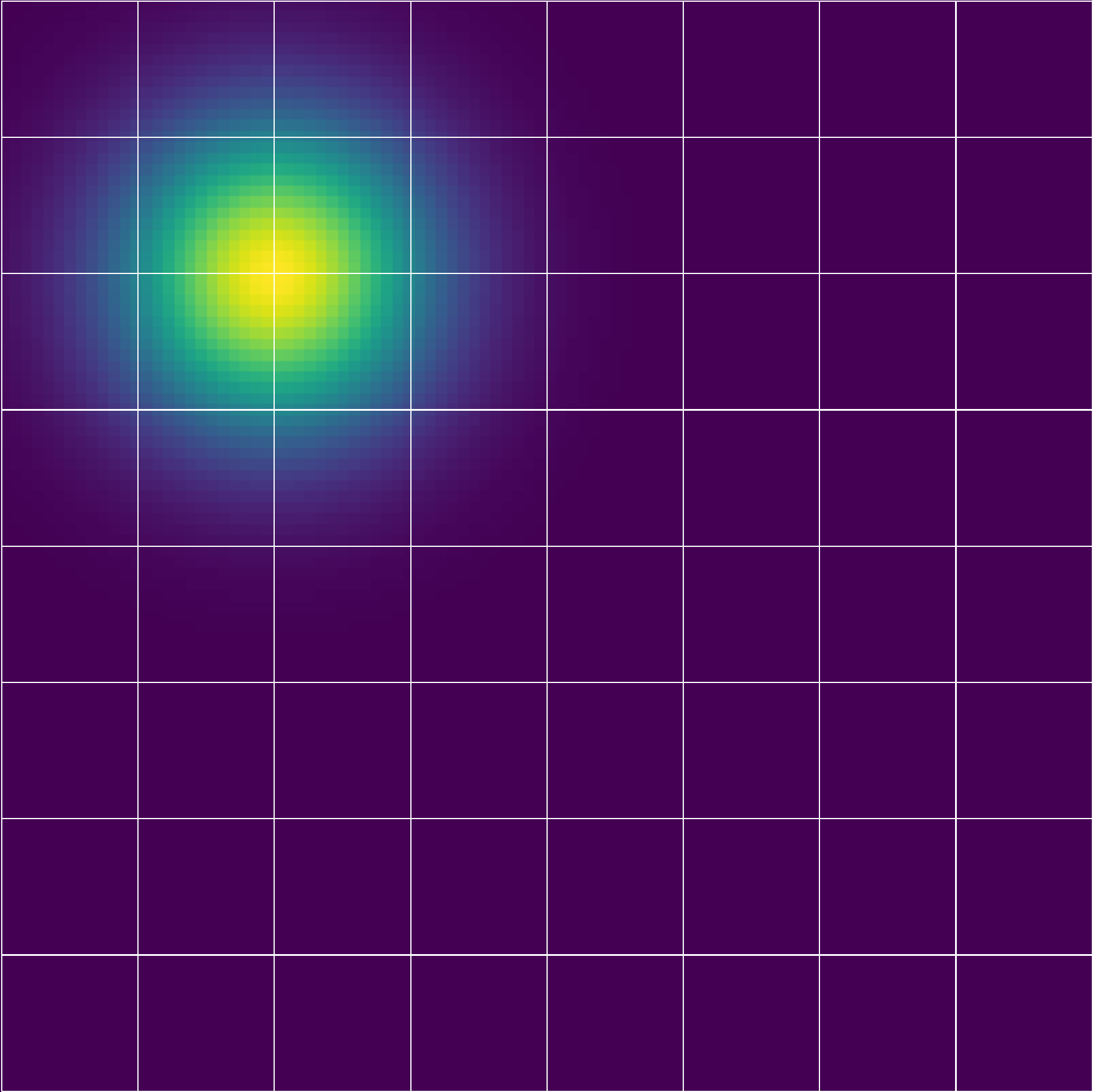}}
  \vspace{.3em}
  \centerline{\includegraphics[width=0.9\lenaWidth]{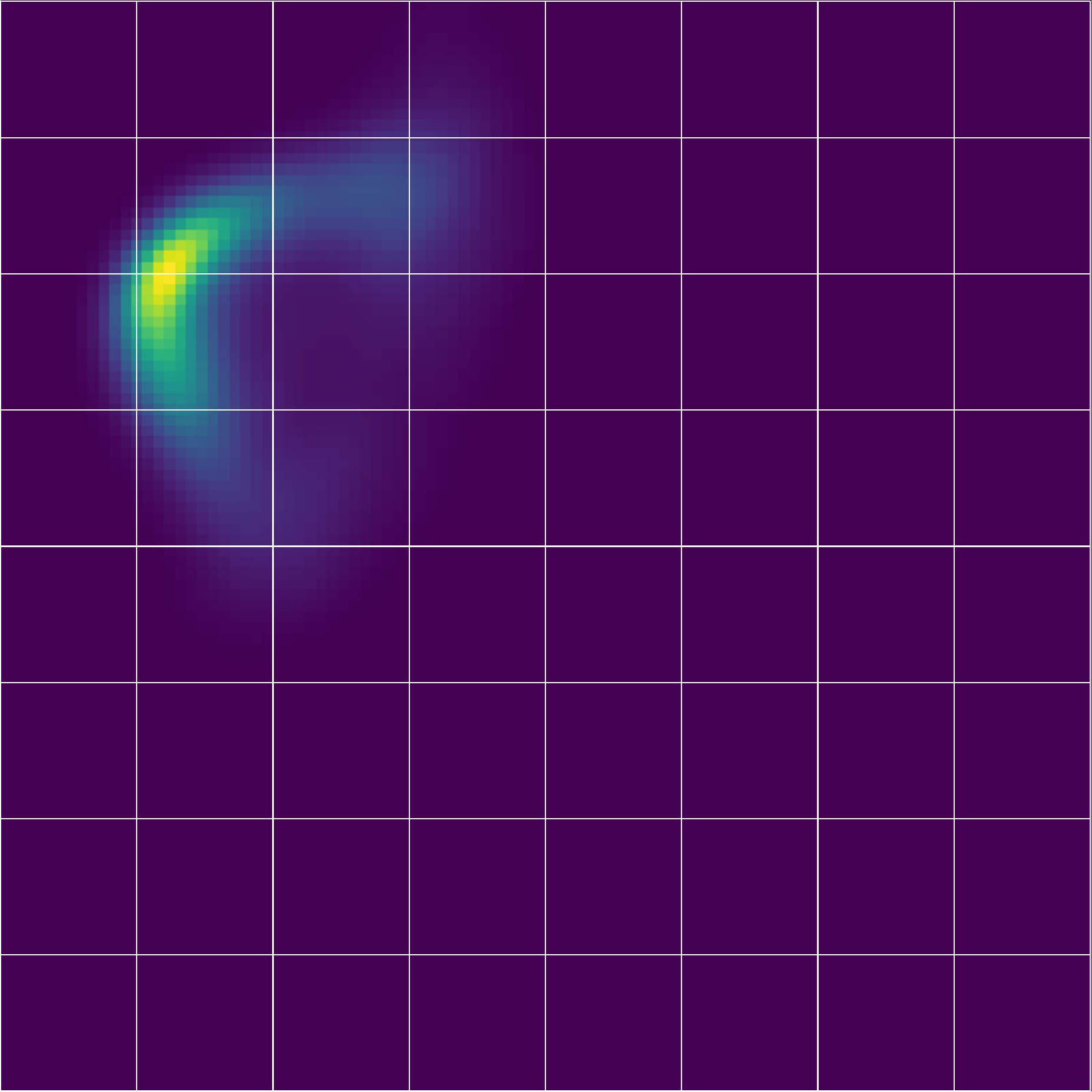}}
  \centerline{$t=2$}
\end{minipage}
\centering
\begin{minipage}[b]{.10\textwidth}
\vspace{1mm} 
\centerline{\input{mean_dynamic_2.tikz}}
\end{minipage}
\caption{Dynamic case. (Top) Heat maps represent (first row) the ID's preferred distribution $M^\star_t$, (second row) the induced agent distribution $M^a_t$ at time-steps $t=0,1,2$. 
(Bottom) Average episodic cumulative KL divergences for each evaluation of the ID's BO outer loop (averaged over 100 independent tests per evaluation for 4 independent runs).
Without the influence of the ID, the agents behave similar to the default behaviour displayed in Fig. \ref{fig:oneshot}-Top middle.\vspace{-3 mm}}
\label{fig:dynamic}
\end{figure}

In the \textbf{dynamic game} the ID's desired distribution changes over time. In our experiment,  $M_t^{\star}$ for $t = 0, 1, 2$ are as shown by the heat maps in the top row of Fig.~\ref{fig:dynamic} (left), while the bottom row presents the agents' distributions achieved with the ID framework. 


%
\section{Conclusion}\label{conclusion}

In this paper, we introduce an incentive designer (ID) framework - a technique that enables self-interested adaptive learners to converge to efficient Nash equilibria in Markov games. By adding a modifier function to the agents' rewards, our method learns to modify the rewards of self-interested agents to induce efficient, desirable equilibrium outcomes. We prove a continuity property in the ID's modifications to the game which permits a broad range of black-box optimisation techniques to be applied. 

\section{Appendix}
 \begin{customlemma}{A.1}\label{lemma_s1}
Let $A$ and $B$ be sets and let $f: \mathbb{A}\times \mathbb{B}\to\mathbb{R}$ and $h: \mathbb{A}\times \mathbb{B}\to\mathbb{R}$ be two real-valued maps s.th. the following expression holds $\forall a \in \mathbb{A}, b \in \mathbb{B}$ and for some constant $c$:
\begin{equation}
|f(a,b)-h(a,b)|<c, \label{inequality simple lemma}
\end{equation}
It then follows that:
\begin{equation*} 
|\underset{a\in \mathbb{A}, b \in \mathbb{B} }{\rm max} f(a,b)-\underset{a\in \mathbb{A}, b \in \mathbb{B} }{\rm max} h(a,b)| <c
\end{equation*}
\end{customlemma}
\begin{proof}
By (\ref{inequality simple lemma}) we have that$
f(a,b)<c+h(a,b)$. 
After applying the max operator and taking absolute values we find:
\begin{align*}
&\underset{a\in A, b \in B }{\rm max} f(a,b)<c+\underset{a\in A, b \in B }{\rm max} h(a,b)
\\&\implies |\underset{a\in A, b \in B }{\rm max} f(a,b)-\underset{a\in A, b \in B }{\rm max} h(a,b)|<c 
\end{align*}
\end{proof}
\noindent\textbf{Proof of Proposition 4.3.}\vspace{-2 mm}
\begin{proof}
To prove the proposition, we consider the two cases (trajectory targeted and welfare targeted) of the MA's goal separately. 

\textbf{Case I: Welfare Targeted} \newline
For the welfare targeted case, we firstly make the observation that the agents' reward functions $R_{i,\boldsymbol{w}}$ are Lipschitz continuous in $\boldsymbol{w}$. This follows from the fact that the composite function $g_1\circ(g_2\circ(\ldots\circ(g_n(\cdot)\ldots))$ of $n<\infty$ Lipschitzian functions $g_1,g_2,\ldots,g_n$ is itself Lipschitzian (moreover we can then apply Rademacher's lemma to ascertain differentiability almost everywhere).

Specifically, we have for the function $R_{ID}$ that 
\begin{IEEEeqnarray}{rCl}
&    R_{\rm ID}(\boldsymbol{w},h(v^{\cdot,\boldsymbol{w}}))
   -
    R_{\rm ID}(\boldsymbol{w'},h(v^{\cdot,\boldsymbol{w'}}))
\leq \qquad \qquad \qquad \qquad
\\ & 
    L_{R_{\rm ID}}\|\boldsymbol{w} - \boldsymbol{w'}\|+ 
    \left(
        h(v^{\cdot,\boldsymbol{w}}) - (v^{\cdot,\boldsymbol{w'}})
    \right)
\notag
\leq
    L' 
    \|\boldsymbol{w} - \boldsymbol{w'}\|
\end{IEEEeqnarray}
where $L'\triangleq L_{R_{\rm ID}}+ 
    L_{h}$ and $L_{R_{\rm ID}}$ and $L_{h} $ are the Lipschitz constants of $R_{\rm ID}$ and $h$, respectively.
Since $J (\boldsymbol{w},\boldsymbol{\pi}) \triangleq\mathbb{E}\big[R_{\rm ID}(\boldsymbol{w},h(v_a^{\boldsymbol{\pi},\boldsymbol{w}}),\zeta)\big]$ and the function $h$ is uniformly continuous, it follows $J$ is expressible as a composite function of uniformly continuous functions and hence is itself uniformly continuous (since it is in fact Lipschitz continuous).
To prove the remaining part of the proposition we consider now the trajectory targeted case.

\textbf{Case II: Trajectory Targeted}
\newline
Let us now consider a sequence $\{\boldsymbol{w_n}\}$ s.th. $\boldsymbol{w_n}\to\boldsymbol{w}$ as $n$ tends to infinity, then there exists positive scalar values $c$ and $d$, s.th.: 
\begin{IEEEeqnarray}{rCl}
    \mathbb{E}
&&
    \big[
    |
        J(\boldsymbol{w},X^{\boldsymbol{\pi}(\boldsymbol{w})})
        -
        J(\boldsymbol{w_n},X^{\boldsymbol{\pi}(\boldsymbol{w_n})})
    |
    \big] \qquad
\notag\\
&& \qquad
\leq 
    c
    |
        \boldsymbol{w} - \boldsymbol{w_n}
    |
    +
    d
    |
        X^{\boldsymbol{\pi}(\boldsymbol{w})}
        -
        X^{\boldsymbol{\pi}(\boldsymbol{w_n})}
    |
,
\label{J broken down in lipschitz cpts}
\end{IEEEeqnarray}
where we have used the Lipschitzianity of $J$ to deduce the inequality. 
Since $X^{\boldsymbol{\pi}(\boldsymbol{w_n})}\to X^{\boldsymbol{\pi}(\boldsymbol{w})}$ as $n\to\infty$, then by (\ref{J broken down in lipschitz cpts})  and by the dominated convergence theorem we can deduce that $\exists M\in\mathbb{N}$ s.th. for $n \geq M$ such that:
\begin{equation}
\mathbb{E}\big[J(\boldsymbol{w},X^{\boldsymbol{\pi}(\boldsymbol{w})})-J(\boldsymbol{w_n},X^{\boldsymbol{\pi}(\boldsymbol{w_n})})\big]<c\delta    \nonumber
\end{equation}
for some constants $c>0$ and $\delta>0$ s.th $\delta\to 0$ as $n\to\infty$.
\end{proof}
%
%






\end{document}